\newcommand{\jc}[1]{{\color{red}[JC: #1]}}
\newcommand{\fa}[1]{{\color{purple}[FA: #1]}}
\newcommand{\je}[1]{{\color{cyan}[JE: #1]}}
\newcommand{\jen}[1]{{\color{cyan}#1}}
\renewcommand{\jc}[1]{}
\renewcommand{\fa}[1]{}
\renewcommand{\je}[1]{}
\renewcommand{\jen}[1]{}
\begin{document}


\title{Gottesman-Kitaev-Preskill codes: A lattice perspective}

\author{Jonathan Conrad}\orcid{0000-0001-6120-9930}
\email{j.conrad1005@gmail.com}
\affiliation{Dahlem Center for Complex Quantum Systems, Physics Department, Freie
Universit{\"a}t Berlin, Arnimallee 14, 14195 Berlin, Germany}
\affiliation{Helmholtz-Zentrum Berlin f{\"u}r Materialien und Energie, Hahn-Meitner-Platz 1, 14109
Berlin, Germany}
\author{Jens Eisert}\orcid{0000-0003-3033-1292}
\affiliation{Dahlem Center for Complex Quantum Systems, Physics Department, Freie
Universit{\"a}t Berlin, Arnimallee 14, 14195 Berlin, Germany}
\affiliation{Helmholtz-Zentrum Berlin f{\"u}r Materialien und Energie, Hahn-Meitner-Platz 1, 14109
Berlin, Germany}
\author{Francesco Arzani}\orcid{0000-0002-4439-6962}
\email{frarzani@zedat.fu-berlin.de}
\affiliation{Dahlem Center for Complex Quantum Systems, Physics Department, Freie
Universit{\"a}t Berlin, Arnimallee 14, 14195 Berlin, Germany}

\date{\today}

\begin{abstract}

We examine general Gottesman-Kitaev-Preskill (GKP) codes for continuous-variable quantum error correction,  including concatenated GKP codes,  through the lens of lattice theory,  in order to better understand the structure of this class of stabilizer codes.  We derive formal bounds on code parameters,  show how different decoding strategies are precisely related,  propose new ways to obtain GKP codes by means of glued lattices and the tensor product of lattices and point to natural resource savings that have remained hidden in recent approaches.  We present general results that we illustrate through examples taken from different classes of codes,  including scaled self-dual GKP codes and the concatenated surface-GKP code.

\end{abstract}

\maketitle

\section{Introduction}

It is known that quantum physics can be harnessed to speed-up certain computational tasks.  However, it seems challenging to take advantage of such potential in practice.  Arguably the biggest hindrance to achieving reliable quantum information processing is the difficulty to control systems and sufficiently isolate them from their environment.  The field of \emph{quantum error correction (QEC)} has been developed to cope with errors by choosing an appropriate encoding to constrain the action of environmental noise on the stored information and by repeated measurements of an associated stabilizer group to remove entropy from the system.
  With advancing experimental control over \emph{superconducting} and \emph{integrated optical photonic} circuits,  recent years have witnessed growing interest in bosonic quantum error correction, where information is embedded  into bosonic or continuous-variable (CV) systems exploiting the redundancy available in the associate infinite dimensional Hilbert space.  In particular,  so-called \emph{Gottesman-Kitaev-Preskill (GKP)} codes \cite{GKP} have been increasingly studied as promising components towards building a scalable quantum computer \cite{Terhal_2020,  Grimsmo_Puri,  Bourassa_2021,  bartolucci2021fusionbased, noh2021low}. 
  
While the original proposal for the GKP code was made as early as in 2001,  only in recent years experimental efforts of realizing such codes have gained momentum.  Quantum error correction with single mode versions of the GKP code has recently been demonstrated experimentally in ion-traps \cite{Fluehmann_2019} and superconducting hardware \cite{Campagne_Ibarcq_2020} while a growing number of theoretical proposals investigate the possibility to concatenate such codes with known qubit codes \cite{fukui2017analog,  toricGKP,  surfGKP,  Haenggli_2020,Bourassa_2021} to achieve exponential error suppression by increasing the number of modes.  Most works in this direction focus on understanding the single mode GKP code as an effective encoded qubit,  with an effective noise channel,  and proceed in their analysis using the well established toolbox of qubit  quantum error correction.  While it allowed rapid progress,  this approach significantly limits the class of codes that can be considered and the analytical grasp on the continuous nature of the underlying system.  In general,  both the structure of the state and that of the stabilizer group of GKP codes can be understood in terms of lattices in phase-space,  which is often used to introduce and visualize single-mode examples.  But the deep connection with lattices also holds in the multi-mode case,  including but,  crucially,  not limited to codes obtained from concatenating a single-mode GKP code with qubit stabilizer codes.  The description of  (multi-mode) GKP codes in terms of symplectic lattices was already formulated in the seminal work \cite{GKP}.  However,  only a handful of examples can be found in the literature \cite{ Haenggli_2020,  Harrington_Thesis,  HarringtonPreskill,  oscillator_to_oscillator} that attempt to use it for a better understanding of the structure and properties of such codes,  for constructing new codes or for devising decoding algorithms.   

In this work we present an introduction to lattice theory for quantum error correction practitioners and show how the tools it provides can be fruitfully put to use for the analysis and design of GKP codes.  The upshot is that,  since lattices have long been studied in many branches of pure and applied mathematics,  from number theory to string theory,  they provide us with a rich toolbox whose potential remains largely untapped for quantum error correction.  By elaborating on the connection with lattices,  we move some further steps in this direction.  We hope that this will motivate a wider adoption of lattice techniques in the study of GKP- and general quantum error correcting codes.  The results in the present work suggest that this could lead to further insight as well as potential practical advantages towards fault-tolerant quantum computation with GKP codes.

To demonstrate the power of the lattice formalism, we derive a number of new results relating the parameters of GKP codes,  such as  the distance and encoding rate.  We study the equivalence of codes under symplectic transformations,  formulate the general decoding problem in terms of lattice quantities and provide two constructions leading to new codes that cannot be described (to the best of our knowledge) within the usual concatenated GKP-qubit paradigm.  This highlights that the latter only captures very special cases,  that,  while easy to conceive,  have no \emph{a priori} reason to be considered optimal in taking advantage of the underlying bosonic nature of the system.

This article is structured as follows. 
In Section \ref{sec:preliminaries},  we set the stage by introducing the relevant notions from lattice theory that will be used in the rest of this manuscript and we discuss their relation to GKP codes.
Section \ref{sec:bases} explores how the choice of the lattice basis corresponding to the stabilizer generators influences the properties of the latter.  We focus in particular on suitable generalizations of 
\emph{Calderbank-Shor-Steane (CSS)} codes and \emph{low-density-parity-check (LDPC)} codes to the GKP setting.  We show that significant savings in the number of stabilizer measurements with respect to the procedure usually employed for concatenated GKP-qubit codes are possible by measuring a minimal set of stabilizers at each error correction cycle.  Furthermore,  we show that concatenation of qubit stabilizer codes with single mode GKP codes provides a natural embedding of the qubit code into a lattice through the so-called \emph{Construction A} \cite{ConwaySloane}, 
which leads to a no-go result that impacts the complexity of the decoding problem,  in particular exact \emph{minimum-energy-decoding}.

In Section \ref{sec:symplectic},  we prove a necessary and sufficient condition for two GKP codes to be equivalent up to a symplectic transformation,  extending the single mode case mentioned in ref.~\cite{Haenggli_2020}.  This can be interpreted as code-switching by means of a Gaussian unitary.
Section \ref{sec:distance} is focused on defining a notion of \emph{distance} for GKP codes via the length of the shortest non-trivial dual coset element of the lattice that describes the stabilizers.  This definition is motivated by the structure of common noise models used for GKP codes in the literature.  We discuss tradeoffs between the distance and other code parameters,  such as encoding rate and the ``length'' of stabilizers.  Furthermore,  we show how to estimate the distance of a GKP code from the theta functions of the corresponding lattice,  which are central quantities in lattice theory.  Finally,  we show that the distance of a CSS \emph{qubit} stabilizer code is fully determined by the weight distribution of its stabilizer group,  i.e.,  the number of stabilizers of particular Hamming weight in their representation by symplectic binary vectors.  Although a similar statement can be derived from the seminal work of Shor,  Laflamme \cite{ShorLaflamme} and Rains \cite{Rains},  it follows straightforwardly from the lattice theoretic concepts we introduce,  showcasing that a lattice theoretic standpoint can also benefit the examination of qubit stabilizer codes.

Theta functions are central in the general formulation of the maximum likelihood (coset) decoding problem for general GKP codes,  as we detail in Section \ref{sec:MLD}.  This hints at the existence of new decoding strategies that do not rely on the concatenated code structure,  and are hence more general than those adapted from qubit codes.  A detailed discussion of such strategies is beyond the scope of the present work and is therefore postponed to a separate publication.

Finally,  we discuss in Section \ref{sec:constructions} two constructions,  based on  glued lattices and the lattice tensor product operation,  respectively,  to obtain GKP codes beyond scaled and concatenated codes,  the two main code constructions featured in previous works and the rest of this work.  We foresee that such alternative constructions may become of practical relevance in the near future,  especially in setups with limited locality constraints,  such as photonic architectures.  Furthermore,  we show that some properties of these codes can be computed easily,  potentially making them also useful examples to probe the limits of GKP codes.  Concluding remarks and future perspectives in Section \ref{sec:conclusions} complete this work.

\section{Preliminaries and notation \label{sec:preliminaries}}

In this section, we review the basic setup of GKP codes as proposed in ref.~\cite{GKP} and introduce some basic concepts of lattice theory to analyze their structure.  Good introductions to lattice theory are found in refs.~\cite{CryptoLectureNotes,  ConwaySloane},  from which many of the following definitions and lattice theoretic arguments presented here are adapted.
A \emph{stabilizer code} \cite{gottesman_thesis} embeds information into a physical Hilbert space by defining the code-space as fixed point of a finitely generated,  abelian group $\mathcal{S}$,  i.e.,  logical information is stored in terms of state vectors 
\begin{align}
\ket{\psi}:\; s\ket{\psi}=+1 \ket{\psi}\,  \forall s \in \mathcal{S}.
\end{align}
The \emph{GKP code} \cite{GKP} is a stabilizer code acting on the Hilbert space of $n$ bosonic modes,  where stabilizers are given by displacement operators
\begin{align}
D\lr{\bs{\xi}} = \prod_k \exp \left\{-i\sqrt{2\pi}u_k \hat{p}_k + i \sqrt{2\pi} v_k \hat{q}_k \right\} = \exp\left\{-i \sqrt{2\pi} \bs{\xi}^\mathrm{T} J \hat{\bs{x}}\right\},\quad \bs{\xi} = \lr{\ba{c} \bs{u} \\ \bs{v} \ea} ,\quad  \bs{\hat{x}} =  \lr{\ba{c} \hat{\bs{q}} \\ \bs{\hat{p}} \ea}, \label{eq:displConvJ}
\end{align}
where \begin{align}
J=J_{2n}=\begin{pmatrix}0 & 1 \\ -1 & 0\end{pmatrix} \otimes I_n= \begin{pmatrix}
0 &  I_n \\
-I_n  & 0 \end{pmatrix}
\end{align} 
is the \emph{symplectic form} and $\bs{\xi}\in\mathbb{R}^{2n}$ are phase-space vectors that specify the displacement.   When no index is specified, we follow the convention $J=J_{2n}$ and $I=I_{2n}$.  We also  adopt the convention $\hbar=1$,  such that $[\hat{x}_k,\hat{x}_l]=iJ_{k,l}$ for $k,l=1,\dots, 2n$. 
Throughout this work,  we use bold symbols to denote (column) vectors.
\emph{Displacement operators} implement a shift of the  $2n-$quadratures
\begin{equation}
D^{\dagger}(\bs{\xi}) \hat{\bs{x}}D(\bs{\xi})=  \hat{\bs{x}}+ \sqrt{2\pi}\bs{\xi},
\end{equation}
commute as 
\begin{equation}
D(\bs{\xi})D(\bs{\eta})=e^{-i 2\pi\bs{\xi}^TJ\bs{\eta}}D(\bs{\eta})D(\bs{\xi}), \label{eq:disp_commute}
\end{equation}
and satisfy
\begin{equation}
D(\bs{\xi})D(\bs{\eta})=e^{-i \pi\bs{\xi}^TJ\bs{\eta}}D\lr{\bs{\xi}+\bs{\eta}}. \label{eq:disp_close}
\end{equation}
Given the metaplectic representation of a  symplectic operator~\cite{pramana} $S\mapsto U_S,\,  S^TJS=J $,  we have
\begin{equation}
U_S  \hat{\bs{x}} U_S^{\dagger}=S \hat{\bs{x}},
\end{equation}
such that 
\begin{equation}\label{eq:symplDisplTrasf}
U_{S^{-1}}D(\bs{\xi})U_{S^{-1}}^\dagger=\exp\lr{-i\bs{\xi}^T J S^{-1}\hat{\bs{x}}}=\exp\lr{-i\bs{\xi}^T S^T J S S^{-1}\hat{\bs{x}}}=D(S\bs{\xi}).
\end{equation}
Displacement operators are orthogonal,  \begin{equation}
\Tr\lrq{D^{\dagger}\lr{\bs{\xi}} D(\bs{\eta})} =  \delta^{(2n)}\lr{\bs{\xi}-\bs{\eta}}, \label{eq:disp_orthogonal}
\end{equation} 
and form a complete basis such that any operator $\hat{O} \in \mathcal{L}\lr{\mathcal{H}}$ can be resolved as\footnote{To
be precise,  this must be restricted to trace-class operators and one should formulate exceptions for operators such as $D\lr{\beta}$ to satisfy eq.~\eqref{eq:disp_orthogonal}.} 
\begin{equation}
\hat{O}=\int d\bs{x} o(\bs{x}) D\lr{\bs{x}} = \int d\bs{x}  \Tr\lrq{D^{\dagger}\lr{\bs{x}} \hat{O}}  D\lr{\bs{x}}.
\end{equation}
We can now define the GKP stabilizer group.

\begin{mydef}[GKP stabilizer group \cite{GKP}]\label{def:GKP}
The stabilizer group of a GKP code is given by a set of displacements \jc{there are $\langle \cdot \rangle$ brackets down there,  hence ``given" by and not ``generated"}
\begin{equation}
\mathcal{S}:=\langle D\lr{\bs{\xi}_1},\dots,  D\lr{\bs{\xi}_{2n}} \rangle,
\end{equation}
where $\lrc{\bs{\xi}_i}_{i=1}^{2n}$ are linearly independent and we have  $\bs{\xi}_i^TJ\bs{\xi}_j \in \mathbb{Z} \forall\,
 i,\, j$.
\end{mydef}

Exploiting the structure of the displacement operators,  the GKP construction defines a 
stabilizer group isomorphic to a \emph{lattice} with generator matrix
\begin{equation}
M=\begin{pmatrix}
\bs{\xi}_1^T\\ \vdots \\ \bs{\xi}_{2n}^T
\end{pmatrix},
\end{equation}%
which is simply the set of integer linear combinations of basis elements%
\begin{equation}
\mathcal{L}=\mathcal{L}\lr{M}=\lrc{\bs{\xi} \in \mathbb{R}^{2n}\big\vert \;\bs{\xi}^T=\bs{a}^TM, \, \bs{a}\in \mathbb{Z}^{2n \times 2n}} .
\end{equation}%
We follow the convention of ref.~\cite{GKP} such that the basis vectors of the lattice $\mathcal{L}$ constitute the \textit{rows} of the generator matrix.

For $\mathcal{S}$ to constitute a stabilizer group,  it needs to be (1.)  a group and (2. ) abelian.  By construction in Definition  \ref{def:GKP},  $\mathcal{S}$ is a group and by means of 
 eq.~\eqref{eq:disp_commute} it can be observed that $\mathcal{S}$ is abelian if and only if the \textit{symplectic Gram matrix} associated with \textit{any} generator $M$ of the lattice%
\begin{equation}
A:=MJM^T
\end{equation}
has only \textit{integer} entries.  We then say that the lattice $\mathcal{L}(M)$ is \textit{symplectically integral}.  
 \je{Maybe this is redundant, but one could say that the lattice is uniquely defined by its generator $M$. But the converse is not true, as we can state already here, s there are many $M$ corresponding to the same lattice.} \jc{this discussion is further below}
 The symbol $A$ will be reserved to the symplectic Gram matrix throughout the manuscript and although it depends on the particular generator $M$ we will mostly leave this understood when the generator is clear from the context. 
 
Using eq.~\eqref{eq:disp_close},  it can be verified that each element of the stabilizer group $\mathcal{S}$ can be written as
\begin{align}
\prod_{i=1}^{2n} D\lr{\bs{\xi}_i}^{a_i} &= \prod_{i=1}^{2n} D\lr{a_i \bs{\xi}_i}  \\
&=e^{i\pi\bs{a}^T  A_{\lowertriangle}  \bs{a}} D\lr{(\bs{a}^TM)^T},
\end{align}
\jc{
More elaborate derivation: Let $\bs{a}_{[i]}$ denote the vector $\bs{a}=\bs{a}_{[2n]}$ with all entries $a_j=0\, \forall j>i$ and let $\phi_{i}$ denote the cumulative global phase when simplifying the product (row vector convention in $D\lr{}$ here)
\begin{align}
\prod_{i=1}^{2n} D\lr{\bs{\xi}_i}^{a_i} 
&= \prod_{i=1}^{2n} D\lr{a_i \bs{\xi}_i} \\
&=\lrq{\prod_{i=1}^{2n-1} D\lr{a_i \bs{\xi}_i} } D\lr{a_{2n} \bs{\xi}_{2n}} \\
&= e^{i\phi_{2n-1}} D\lr{\bs{a}_{[2n-1]}^TM} D\lr{a_{2n} \bs{\xi}_{2n}} \\
&= e^{i\phi_{2n-1}}  e^{-i\pi \bs{a}^T_{[2n-1]}MJ(a_{2n}\bs{\xi}_{2n})} D\lr{\bs{a}_{[2n]}^TM}.
\end{align}
We combined the displacements using eq.  \eqref{eq:disp_close},  from where we can already see that $e^{i \phi_j}=\pm 1 \forall j$ when $A$ is integer.  The expression above allows us to write down the recursion
\begin{align}
\phi_1&=0 \\
\phi_i 
&=\phi_{i-1}-\pi \bs{a}^T_{[i-1]}MJ(a_{i}\bs{\xi}_{i}) \\
&=\phi_{i-1}+\pi \sum_{j=1}^{i-1} a_i A_{i, j} a_j.
\end{align} 
Computing the recursion,  we observe that 
\begin{equation}
\phi_{2n}=\pi \sum_{i>j} a_i A_{i,j} a_j = \pi \bs{a}^T A_{\lowertriangle} \bs{a},
\end{equation}
where $A_{\lowertriangle}:\; (A_{\lowertriangle})_{i,j}=A_{i,j}\, \forall\, i>j$  is the lower triangular matrix of $A$.
}
where $\bs{a}\in \mathbb{Z}^{2n}$ is an integer vector and  $A_{\lowertriangle}:\; (A_{\lowertriangle})_{i,  j}=A_{i,  j}\, \forall\, i>j$  is the lower triangular matrix of $A$. 
The stabilizer group $\mathcal{S}$ is hence given by
\begin{equation}
\mathcal{S}=\lrc{e^{i\phi_M\lr{\bs{\xi}}} D\lr{\bs{\xi}} \vert \bs{\xi} \in \CL},\label{eq:GKP_2}
\end{equation}
where
\begin{equation}
\phi_M\lr{\bs{\xi}}=\pi \bs{a}^T A_{\lowertriangle} \bs{a}, \; \bs{a}^T=\bs{\xi}^TM^{-1}
\end{equation}
denotes the phase-sector when we have chosen $M$ to be the \emph{pivot basis},  i.e.  the set of basis vectors for which  each associated displacement operator is fixed to eigenvalue $+1$ by definition \ref{def:GKP}.  We will also use the notation $\mathcal{S}=\mathcal{S}(M, \phi_M)$  to specify GKP stabilizer groups from its generator $M$ when the phase-sector is relevant.  The pair $(M, \phi_N)$  specifies that the generator for the stabilizer group $D\lr{M_i^T}$ is fixed by eigenvalue $e^{i\phi_N\lr{M_i^T}} $ in code space.  Consistently,  we have $(M, \phi_M)=(M, 0)$.

When $\CL$ is symplectically integral,  we also have $e^{i\phi\lr{\bs{\xi}}}=\pm 1\,  \forall \, \bs{\xi} \in \CL$,  which is also already apparent from eq.~\eqref{eq:disp_close}. The additional phases in front of the displacement operators act trivially on the projective Hilbert space,  such that they are irrelevant when considering stabilizer operators $\mathcal{S}$ as \textit{symmetries} of the effective code.  However,  they play a role in error correction,  where the goal is to fix each stabilizer $s\in \mathcal{S}$ to the $+1$ eigenvalue to define a code space and the phases determine to which $\pm 1$ eigenvalue the associated displacement operator needs to be fixed.

\begin{remark}[Definition of the GKP stabilizer group]\normalfont
We remark that the definition of the GKP stabilizer group we make here differs from the one presented in ref.~\cite{GKP},  which has been stated as $\mathcal{S}'=\lrc{D\lr{\bs{\xi}} \vert  \bs{\xi} \in \CL}$,  where $\CL$ is a symplectically integral lattice.  Although this defines a commuting set of stabilizers,  $\mathcal{S}'$ is in fact not closed in general due to the extra phases appearing when combining two displacement operators eq.~\eqref{eq:disp_close} and would also not equal $\mathcal{S}$ as defined above.  Definition \ref{def:GKP} follows more closely the practical reality,  where error correction is performed by measuring  minimal generating set of the stabilizer group and by fixing (correcting) their eigenvalues to $+1$. 

The definition $\mathcal{S}'$ is closed and identical to $\mathcal{S}$ in the special case where $A \in \lr{2\mathbb{Z}}^{2n \times 2n}$,  i.e.,  the associated lattice $\CL$ is \emph{symplectically even}.  We found,  however, that this would restrict us too severely in the classes of codes to analyze (e.g.  the concatenated GKP - surface code introduced later is described by a symplectically integral,  but not symplectically even lattice). \je{Really, is the latter obvious?}\jc{not really super obvious,  but I don't think its worth spelling out here...}
We illustrate the difference between the sets $\mathcal{S}$ and $\mathcal{S}'$ using the stabilizer group for the sensor state \cite{displacement_sensor},
\begin{equation}
\mathcal{S}_{\text{sensor}}=  \langle D\lr{\bs{e}_1},\, D\lr{\bs{e}_2}  \rangle= \lrc{(-1)^{mn }D\lr{m\bs{e}_1+n\bs{e}_2},\; (m,n)\in \mathbb{Z}^2 },
\end{equation} where $\bs{e}_j$ are canonical basis vectors,  so that code-states should have eigenvalue $-1$ on displacement for which $mn$ is odd if they have eigenvalue $+1$ on each generator $D\lr{\bs{e}_1},\, D\lr{\bs{e}_2}$.   Given the conventional definition $\mathcal{S}'$,  $D\lr{\bs{e}_1}D\lr{\bs{e}_2}=(-1)D\lr{\bs{e}_1+\bs{e}_2}$ would formally not be included in the stabilizer group. \\
\end{remark}

A system of $n$ bosonic modes can be associated with an infinite dimensional Hilbert space and a $2n-$dimensional phase space,  whose axes correspond to the quadratures of the system,  each corresponding to an observable taking continuous values.  To encode discrete quantum information,  such as a qubit,  the state space needs to be fully ``discretized'' by introducing suitable constraints.  This is done by choosing a generating set for the stabilizer group with $2n$ linearly independent generators.  Each linearly independent generator can be seen as ``discretizing'' one direction in phase space.  This is the reason why  we require the lattice $\mathcal{L}$ to be \textit{full rank},  that is,  we demand that $M$ has full row-rank.  It is always possible to specify a lattice in $\mathbb{R}^{2n}$ using more than $2n$ basis vectors,  (we will see an example of such a scenario later),  but they can always be reduced to $2n$ linearly independent vectors,  a process for which a number of (efficient) algorithms are known \cite{CryptoLectureNotes}.  We note that also non-full rank (or \textit{degenerate}) lattices had recently been explored to define effective GKP codes \cite{Noh_2020,  oscillator_to_oscillator}.  The code space of such codes retains undiscretized quadratures which allows to encode and perform error correction on general CV states.  In this work we focus on encoding qubits or qudits defined via full-rank lattices while encoding of continuous information using degenerate lattices is briefly explained in appendix \ref{appendix:degenerate}.

Given a lattice $\mathcal{L}$,  the lattice basis is not unique.  Rather,  two generator matrices $M, \,M' $ generate the same lattice $\mathcal{L}\lr{M}=\mathcal{L}\lr{M'}$ if and only if there exists a unimodular matrix $U\in GL\lr{2n,\, \mathbb{Z}},\, |\det(U)|=1$ such that
\begin{equation}
M'=UM.
\end{equation}
Such transformation also transforms 
\begin{equation}
A \mapsto A'=UAU^T.
\end{equation}
Note that $A'$ has even entries if and only if $A$ does.  
Due to the phases appearing in eq.~\eqref{eq:GKP_2},  when a different basis $M'$ is used to fix the stabilizer group as in Definition  \ref{def:GKP},  the generating set for the stabilizer group needs to be chosen as 
\begin{equation}
\lr{M', \,  \phi_M}
\end{equation}
to yield the same stabilizer group and,  effectively,  the same code and fix the same code-space.
The symplectic Gram matrix is invariant under a symplectic transformation 
\begin{equation}
M\mapsto MS^T=\begin{pmatrix}
(S\bs{\xi}_1)^T\\ \vdots \\ (S\bs{\xi}_{2n})^T
\end{pmatrix}.
\end{equation}%
Note that,  unlike the change of basis,  a symplectic transformation generally changes the lattice but also leaves the symplectic Gram matrix invariant.  As a result,  symplectic transformations can serve as useful tool to adapt the code to the properties of the noise while keeping the same encoding rate,  as we will see in the following.  
When a GKP code is specified by its generator $\mathcal{S}=\mathcal{S}(M,0)$,  the corresponding phase sector transforms under symplectic transformation as \begin{equation}
\phi_M \mapsto \phi_{MS^T}.
\end{equation}

Any basis $M$ of the full rank lattice $\mathcal{L}$ partitions $\mathbb{R}^{2n}=\mathbb{R}^{2n}M$ into lattice translates of the \textit{fundamental parallelepiped} (note the row-vector convention adopted here),
\begin{equation}
\CP\lr{M}:=\lrc{\bs{x}^TM \,\big\vert \; \bs{x}\in [0,1)^{2n} }
\end{equation}
such that every point $\bs{x}^T \in \mathbb{R}^{2n}$ is uniquely associated to a lattice point $\bs{x}^T\in \bs{a}^TM + \CP\lr{M}$.  Equivalently,  one can choose to partition $\mathbb{R}^{2n}$ using the \textit{centered fundamental parallelepiped}
\begin{equation}
\CC\lr{M}:=\lrc{\bs{x}^TM \big\vert\; \bs{x}\in \left[-\frac{1}{2},\frac{1}{2}\right)^{2n} }.
\end{equation}
Any set that exactly partitions the space under lattice translations is called \emph{fundamental} and has volume 
\begin{equation}
\det\lr{\CL}:=|\det\lr{M}| =|\sqrt{\det\lr{A}}|=|\sqrt{\det\lr{G}}|
\end{equation}
where $G:=MM^T$ is the \textit{euclidean} Gram matrix associated to the generator $M$.  The determinant $\det\lr{G}$ is a lattice invariant known as its \textit{discriminant}.
The determinant (discriminant) can be geometrically interpreted as measure of inverse density of lattice points. 

A \textit{sublattice}\jc{it is called ``sublattice" and not ``sub-lattice" just like we have ``subsets" and not ``sub-sets"} $ \mathcal{L}' \subseteq \mathcal{L}$ is a subset of $\mathcal{L}(M)$ that is itself a lattice.  A basis for any $d-$dimensional sublattice of $\mathcal{L}(M)$,  where $d<2n$,  can be given by  
\begin{equation}
S=BM,
\end{equation}
where $B\in \mathbb{Z}^{d\times 2n}$ is an integer matrix of full row rank.
The \textit{symplectic dual} (simply ``dual'' in the following) of a lattice $\mathcal{L}$ is the lattice $\CL^{\perp}$ that consists of all vectors that have integer symplectic inner product with any vector from $\CL$%
\begin{equation}
\CL^{\perp} = \lrc{ \mathcal{\bs{\xi}}^\perp \in \mathbb{R}^{2n}\ |\ \lr{ \bs{\xi}^\perp }^T J \bs{\xi} \in \mathbb{Z}\ \forall \bs{\xi}\in \mathcal{L}   }.
\end{equation}%
Within the GKP code construction vectors $\mathcal{\bs{\xi}}^\perp \in \CL^\perp$  correspond to the displacement amplitudes that are associated to the centralizer of the stabilizers within the set of displacement operators.
A canonical choice of basis $M^{\perp}$ for the symplectic dual is specified by fixing $M^{\perp}$ to satisfy
\begin{equation}
M^{\perp}JM^T=I.
\end{equation} 
Since we will focus on full rank lattices,  for which $M$ is non-singular,  we obtain the canonical dual basis as \begin{equation}
M^{\perp}=(JM^T)^{-1}=M^{-T}J^T.\label{eq:canonical_perp}
\end{equation}
Together with the definition of $A$ it can be shown that 
\begin{equation}
M=AM^{\perp},
\end{equation}
that is,  the symplectic Gram-matrix $A$ describes how the sublattice $\CL=\CL\lr{M}\subseteq \CL^{\perp}=\CL\lr{M^{\perp}}$ associated to stabilizer operators embeds into the lattice associated to its centralizer.  
The \textit{dual quotient} (or \textit{glue group}) of $\CL$, $\CL^{\perp}/\CL$ thus lists the logically distinct displacements admitted by the GKP code $\mathcal{S}$.  Since we are interested in effective qubit encodings with logical dimension $d=2^k$,  these displacements form the effective logical Pauli group. 
\je{Maybe it is worth stating that not only the dimensions match, but that one actually
has a representation of the Pauli group.}\jc{thats what the text says...}

Finally,  the number of logically distinct centralizer elements associated to $\mathcal{S}$ is 
\begin{equation}
d^2=|\CL^{\perp}/\CL|=|\det\lr{M}|/|\det\lr{M^{\perp}}|=|\det{A}|=|\det\lr{M}|^2.
\end{equation}
One can verify this formula geometrically by imagining the partition of $\CP\lr{M}$ with patches $\CP\lr{M^{\perp}}$.
Under basis transformation of the direct lattice $M\mapsto UM$ the canonical dual basis transforms as
\begin{equation}
M^{\perp}\mapsto U^{-T}M^{\perp},
\end{equation}
and under symplectic transformations $M\mapsto MS^T $ we obtain
 \begin{equation}
M^{\perp}\mapsto M^{\perp} S^T=\begin{pmatrix}
(S\bs{\xi}^{\perp})_1^T\\ \vdots \\ (S\bs{\xi}^{\perp}_{2n})^T
\end{pmatrix}.
\end{equation}
The symplectic Gram matrix of the dual lattice can be shown to satisfy \begin{equation}
A^{\perp}:=M^{\perp}J\lr{M^{\perp}}^T=A^{-1}=\frac{1}{|\det\lr{A}|}\adj\lr{A},
\end{equation}
where $\adj\lr{A}$ is the \textit{adjugate} of $A$.
It is similarly common  to define the \textit{euclidean} dual of a lattice $\CL$ which we shall denote by $\CL^*$.  This is the lattice in $\mathbb{R}^{2n}$ consisting of all vectors with integer \textit{euclidean} inner product with every vector in $\CL$. 

The euclidean dual is more common than the symplectic one in the lattice theory literature,  and hence often simply called dual.  Similar to the symplectic case,  a canonical basis $M^*$ for the dual lattice can be fixed to satisfy %
\begin{equation}
M^* M^T=I 
\end{equation} %
or equivalently (since we only deal with full rank lattices) 
\begin{equation}
M^*=M^{-T}.
\end{equation}%
Since $J\in O(2n)$ is an orthogonal matrix,  we can observe that the symplectic dual is equivalent to the euclidean dual up to an orthogonal rotation.  In particular,  the distribution of lengths of vectors in $\CL^{\perp}$ is equal to  the distribution of lengths of vectors in $\CL^*$.  We will take advantage of this fact when analyzing the code distance of GKP codes in Section  \ref{sec:distance}.

We close this section by introducing the two most studied classes of GKP codes,  which we will use for illustrations throughout this manuscript.  The first class,  which we will refer to as \textit{scaled codes},  have been thoroughly examined in ref.~\cite{Harrington_Thesis} and build on symplectic \textit{self-dual} lattices.  A symplectic self-dual lattice is a symplectically integral lattice for which $|\det\lr{M}|=|\det\lr{A}|=1$ and,  consequently,  $\mathcal{L}=\mathcal{L}^{\perp}$.  The associated code-space is one-dimensional.

\jc{
Proof: if a lattice is symplectically integral we have $\CL \subseteq \CL^{\perp}$.  Asserting $|\det\lr{A}|=1$ implies that $|\CL/\CL^{\perp}|=1$,  that is,  $\CL=\CL^{\perp}$.
}

\paragraph{Scaled GKP codes.} 
A \textit{scaled GKP code} $\CL(M)$ is obtained by rescaling a symplectically self-dual lattice $\CL_0=\CL\lr{M_0} \subset \mathbb{R}^{2n}$ as
\begin{equation}
M=\sqrt{\lambda}M_0,\; \lambda \in 2\mathbb{Z}.
\end{equation}
The associated symplectic Gram matrix becomes $A=\lambda A_0$ and the dimension of the code-space is $d=|\sqrt{\det\lr{ \lambda A_0}}|=\lambda^n$.  By e.g.  choosing $\lambda=2$ this yields a code with $k=n\log_2\lr{\lambda}=n$.  
Let 
\begin{equation}
\lambda_1\lr{\CL_0}:=\min\lrc{\|\bs{x}\|, \, \bs{x}\in \CL_0}
\end{equation}
be the length of the shortest vector in $\CL_0=\CL_0^{\perp}$.  Note that throughout this manuscript $\|\cdot\|=\|\cdot\|_2$ denotes the euclidean \textit{2-norm}.
The rescaling implies  $\lambda_1\lr{\CL}= \lambda^{\frac{1}{2}}\lambda_1\lr{\CL_0}$ and $\lambda_1\lr{\CL^{\perp}}=\lambda^{-\frac{1}{2}}\lambda_1\lr{\CL_0}$.  For $\lambda>1$ the symplectic dual vector corresponding to $\lambda_1\lr{\CL^{\perp}}$ cannot be in $\CL$ and hence constitutes the shortest logically non-trivial displacement,  
the length of which 
\je{What length do you mean?}\jc{``the symplectic dual vector corresponding to $\lambda_1\lr{\CL^{\perp}}$"}
decreases with the number of logical dimensions that are squeezed into the code.
Prominent examples of scaled GKP codes that have already been discussed in ref.~\cite{GKP} are generated by
\begin{equation}
M_{\square}=\sqrt{2}I_2,
\end{equation}
also known as the \textit{square GKP code},  that encodes $k=1$ qubit into $n=1$ oscillator and
\begin{equation}
M_{\hexagon}=3^{-\frac{1}{4}}\begin{pmatrix}
2 & 0 \\ 1 & \sqrt{3}
\end{pmatrix},
\end{equation}
which is known as the \textit{hexagonal GKP code},  similarly with $k=n=1$ but with a different lattice geometry that requires a slightly larger displacement amplitude to implement a logical operator.  Here we have  $\lambda_1\lr{\CL^{\perp}}=3^{-\frac{1}{4}}$ compared to $\lambda_1\lr{\CL^{\perp}}=2^{-\frac{1}{2}}$ for the square GKP code.  Further symplectic self-dual lattices with larger $\lambda_1\lr{\CL_0}$ and a numerical procedure to find symplectic self-dual lattices are detailed in ref.~\cite{Harrington_Thesis}.

\paragraph{Concatenated square GKP codes.} 
The second class that will be extensively discussed throughout this work is that of concatenated codes built from the square GKP code and a qubit quantum-error correcting code.  Let $Q\subset \mathbb{Z}_2^{2n}$ be a set of binary symplectic vectors\footnote{That is $q^T J q' = 0\mod 2$ for all $q,\ q' \in Q$ (considering addition over reals or in $\mathbb{Z}_2$ does not make any difference).} such that $\mathcal{S}_Q=\lrc{Z^{q_1}_1Z^{q_2}_2\hdots X_{n+1}^{q_{n+1}}\hdots X_{2n}^{q_{2n}} \big\vert \forall \bs{q} \in Q}$ describes the stabilizer group 
of a $[\![n,k,d]\!]$ qubit quantum error correcting code.  
\je{This condition of having symplectic vectors is equivalent with the set being a stabilizer group, right?}\jc{yes.}
We can embed binary vectors in $\mathbb{R}^{2n}$ in the trivial way.  The lattice associated to the concatenated GKP code will be given by 
\begin{equation}
\CL = \Lambda\lr{Q}:=\lrc{\bs{x}\in \mathbb{R}^{2n}\big\vert \sqrt{2}\bs{x}\mod 2 \in Q }
\end{equation}
and equally describes an encoding of $k$ logical qubits into $n$ modes.
Such lattices are known in the literature as \textit{Construction A} lattices 
\cite{ConwaySloane}.\fa{does $\mathcal{Q}$ here actually describe a binary linear code? Or should we consider the X and Z parts separately? } \jc{I don't talk about the classical code here.  What $Q$ describes is the classical stabilizer group,  which you can also treat as a linear code.  It explicitly does not need to be CSS and we definitely should not consider the $X$ and $Z$ parts separately.}
%
Given $r=n-k$ symplectic vectors associated to a set of generators for $\mathcal{S}_Q$ that we stack row-wise into a generator $B_Q\in\mathbb{Z}_2^{r\times 2n}$,  we can write down the  generator for the concatenated code
\begin{equation}
M_\mathrm{conc}=\frac{1}{\sqrt{2}}\begin{pmatrix}
 B_Q  \\ \hline  
2I_{2n}
\end{pmatrix}.
\end{equation}
This basis is overcomplete and can be row-reduced to only consist of $2n$ generators.  
We discuss basis choices in the next section.

\section{Bases of GKP codes \label{sec:bases}}

The  $2n$ rows of the generator matrix $M$ of a GKP code stand in one-to-one correspondence with a minimal number of generators for the stabilizer group which are measured to deduce the error syndrome for recovery.  Hence the choice of basis $M$ strongly influences the physical implementation  of the code and certain basis choices may be beneficial over others.  Natural desiderata include the following properties. 

\paragraph{CSS property.} 
A \emph{CSS code} is known in qubit quantum error correction as a code for which the set of stabilizer operators disjointly separates into one that only contains Pauli-$Z$ type operators and one that only contains Pauli-$X$ type operators (or any other pairing).  In practice,  this has the advantage that ancillary qubits used to measure the stabilizers only need to come in two different initial states and only need to couple with their associated data qubits in one particular way,  i.e.,  via a $CNOT$ or $CZ$ operations.  Furthermore,  for sufficiently simple error models,  decoding algorithms only need to be tailored to the respective subcodes.  The most studied quantum error correcting codes are indeed CSS codes,  such as homological codes \cite{breuckmann_thesis} including the celebrated \emph{toric code} \cite{bravyi1998quantum} or planar versions of it \cite{tomita_2014},  which at the time of writing yield the best promise of protection from noise and allow for geometrically local implementations.
\jc{I dont get what ``expections are scarse" means.}
A natural generalization to GKP codes is to call \emph{CSS} a code whose lattice admits a basis that can be written as \cite{GKP}
\begin{equation}M=M_q \oplus M_p,\label{eq:CSSgen}
\end{equation}
 where $M_q,M_p \in \mathbb{R}^{n\times n}$ describe stabilizers that contain only $\bs{\hat{p}}$ and $\bs{\hat{q}}$ quadrature operators,  respectively.  Equivalentely,  the corresponding lattice can be decomposed into the direct sum of two disjoint sublattices
\begin{equation}
\CL = \CL_q \oplus \CL_p.
\end{equation}

A trivial example of GKP-CSS code is the single-mode square GKP code.  Concatenating a CSS qubit code with the square GKP code trivially preserves the CSS property. 
The decomposition eq.~\eqref{eq:CSSgen} holds concretely for the Hermite normal form  \cite{MicWar01, Micciancio01hnf} of a lattice basis when $M$ is (or can be rescaled to) integer, which is defined as follows.

 \begin{mydef}[Hermite normal form (HNF)]
Let $M$ be the basis for a lattice as described in the main text.  $M$ is in Hermite normal form if 
\begin{enumerate}
\item $M$ is upper triangular and of positive diagonal,  and
\item $0\leq M_{i,j}\leq M_{j,j}\; \forall i,  j $ ,  i.e.,  the pivot of each row is the largest element of its column.
\end{enumerate}
\end{mydef}

The Hermite normal form is special in that,  for integer $M$,  the HNF is unique \cite{HNF}
\je{This is meant only up to permutations of main diagonal elements?}\jc{no...}
and can be computed efficiently (see also ref.~\cite{CryptoLectureNotes}).  It can hence be used to verify the CSS property or compare whether two stabilizer groups are equivalent whenever we are given bases that can be scaled to be integer (and scaled back after applying the preferred HNF algorithm).  In particular, this applies to concatenated square GKP codes such that we can always check whether a given code is CSS by computing its HNF.

\paragraph{Short basis vectors.} Besides being CSS,  the homological qubit codes mentioned above are \textit{low-density-parity-check (LDPC)} codes,  meaning that they admit a choice of generators such that the support of each generator is bounded and each qubit only interacts with a  bounded number of generators. \jc{bounded=bounded by a constant should be clear for everyone who knows error correction...}
Here ``low'' typically means constant in the number of physical qubits.  The LDPC property can be generalized in two different ways.  Analogous to the qubit picture we can ask for the stabilizers of the GKP code to only act non-trivially on a low number of modes and that each mode be only touched upon by a low number of stabilizers.  It is straightforward to see that this property is inherited whenever a qubit LDPC code is concatenated with single mode bosonic codes such as the square- or hexagonal-GKP code and can more generally be quantified by counting the number of non-zero elements of a choice of generator $M$ per row/column.
On the other hand,  in the bosonic case,  it would also be natural to demand for the lattice basis vectors to be \textit{short} in the sense of the euclidean norm.  If one imagines that the stabilizer checks are implemented by coupling in and subsequently measuring  ancillary systems,  the length of the basis vectors depends both on the number of modes that each ancilla is coupled to as well as on the interaction-strength/-time required.  We will see shortly how a minimal basis (that is,  consisting of only $2n$ generators) with short vectors can be constructed for concatenated GKP codes.  In general,  when no extra structure is specified,  short bases can also be computed by means of \textit{lattice reduction} algorithms,  such as the \textit{Lenstra-Lenstra-Lov\'asz} (LLL) algorithm \cite{LLL},  which efficiently finds a short and almost orthogonal basis.

\paragraph{Orthogonality.}
On top of shortening the basis vectors,  the LLL algorithm also attempts to find a basis that is \textit{nearly} orthogonal.  A basis $M$ is orthogonal when $MM^T$ is diagonal and we say a lattice $\CL$ is orthogonal if such a basis exists.  

We will see later in Sections \ref{sec:distance},  \ref{sec:MLD} how orthogonality is useful when attempting to quantify the distance of a GKP code or for the task of decoding.  The single-mode square and rectangular GKP codes correspond to orthogonal lattices.  However,  when a qubit code is concatenated to the square-GKP code,  leading to 
Construction A lattices,  this can only be the case if the qubit stabilizer generators act disjointly on at most two quadratures at once \cite{deciding_orthogonality},  i.e.,  if the qubit stabilizer code decomposes as
\begin{equation}
Q=\bigoplus_i Q_i,  \hspace{1cm} Q_i=\{0,1\} \; \text{or} \; Q_i=\{00,11\}
\end{equation}
up to permutation of coordinates.
Since stabilizer codes require stabilizer generators to overlap non-trivially to achieve long-range entangled states,  we arrive at the following claim.

\begin{claim}[Orthogonality of concatenated GKP codes]\label{claim}
Concatenated GKP codes with long-range entangled code states have a non-orthogonal lattice $\CL$.
\end{claim}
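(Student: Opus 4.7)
The plan is to argue by contrapositive, using the characterization quoted just before the claim together with standard facts about product stabilizer codes. Concretely, I would suppose that the lattice $\CL = \Lambda(Q)$ of a concatenated square-GKP code is orthogonal, and then show that the code states must be short-range entangled, so that any code with long-range entangled codewords cannot have an orthogonal lattice.

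First, I would invoke the result of \cite{deciding_orthogonality} (already stated in the preceding paragraph): $\Lambda(Q)$ is orthogonal only if, up to a permutation of the $2n$ coordinates, $Q$ decomposes as $Q = \bigoplus_i Q_i$ with each block $Q_i$ either $\{0,1\}$ or $\{00,11\}$. The next step is to translate this decomposition of $Q$ back into a statement about the qubit stabilizer group $\mathcal{S}_Q$ and its code space. A block $Q_i = \{0,1\}$ acts on a single quadrature of a single mode and corresponds (after the permutation) either to a trivial stabilizer on that coordinate or to a single Pauli $X$ or $Z$ on one qubit, which simply fixes that qubit to a $\pm 1$ eigenstate. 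A block $Q_i = \{00,11\}$ acts on two coordinates and corresponds to a single weight-two Pauli stabilizer $X\otimes X$ or $Z\otimes Z$ on a pair of qubits, whose code space is a two-dimensional subspace spanned by Bell-like states on those two qubits.

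Combining the blocks, the qubit stabilizer group $\mathcal{S}_Q$ factorizes as a tensor product of stabilizer groups acting on disjoint subsets of at most two qubits. Its code space therefore factorizes as a tensor product of at most two-qubit code spaces, so every codeword is of the form $\bigotimes_i \ket{\psi_i}$ with each $\ket{\psi_i}$ supported on at most two qubits. Under the concatenation with the square GKP code, the corresponding codewords of the bosonic code are likewise tensor products over disjoint clusters of at most two modes. Such states are short-range entangled: they can be produced from a product state by a depth-one local circuit. Contrapositively, any concatenated GKP code whose code states are long-range entangled must have $\CL = \Lambda(Q)$ non-orthogonal.

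The main conceptual step—and the only nontrivial input—is the orthogonality criterion from \cite{deciding_orthogonality}; once that is granted, the remaining work is the bookkeeping translation from the block structure of $Q$ into the product structure of the stabilizer group and of the code space. I would just need to be careful that the notion of ``long-range entangled'' is stated precisely enough (e.g., not preparable by a constant-depth geometrically local circuit from a product state) to make the final contradiction clean; the decomposition produced by the orthogonality hypothesis yields a code whose codewords are preparable by a depth-one circuit acting on disjoint pairs, which falls outside any reasonable definition of long-range entanglement.
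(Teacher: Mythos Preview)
Your proposal is correct and follows the same route as the paper, which also invokes the orthogonality criterion from \cite{deciding_orthogonality} and then observes that the resulting block structure of $Q$ forces non-overlapping stabilizer generators and hence short-range entanglement (the paper states the claim immediately after this discussion without a separate formal proof). One minor imprecision: a block $Q_i=\{00,11\}$ need not yield $X\otimes X$ or $Z\otimes Z$ specifically---after the coordinate permutation it can be any weight-$\leq 2$ Pauli, including $X_iZ_j$ or even $Y_i$ on a single qubit---but since the stabilizer still acts on at most two modes, your factorization argument and conclusion go through unchanged.
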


Phrased differently,  this means that codes associated to orthogonal lattices cannot produce long-range entanglement. 
Finally,  we can also relate the length of stabilizer generators to the possibly admitted encoding rate.

\begin{them}[Hadamard's bound]\label{thm:hadamard}
Let $M$ be a GKP code's generator, $C=\max_i\; \|M_i\|$ with $M_i$ the $i$th row of $M$, and $d=2^k=\sqrt{\det\lr{A}}$ the logical dimension. We have 
\begin{equation}
k \leq 2n \log_2 C.
\end{equation}
\end{them}
\begin{proof}
We have $\det(M^{\perp})=\det(M)^{-1}$,  thus by $M=AM^{\perp}$ we have $|\det(A)|=|\det(M)|/|\det(M^{\perp})|=\det(M)^2$.  Hadamard's inequality implies 
\begin{equation}
|\det(M)|\leq \prod_{i=1}^{2n} \|M_i\| \leq C^{2n},
\end{equation}
which is also known as Hadamard's bound.
Taking the logarithm on both sides yields the result.
\end{proof}
This simple inequality based on Hadamard's bound constrains the encoding rate of a GKP code by the maximum length of the stabilizers.  For a fixed basis $M$,  $C$ quantifies the size of the displacement that needs to be measured to extract the syndrome,  which can loosely be interpreted as the ``experimental hardness'' to implement a code with stabilizer generators described by the rows of $M$.  $C$ is large whenever many modes are involved in a single stabilizer generator (compare with the LDPC property) or the displacement to be measured is large (requiring stronger coupling).  We note that the value of $C$ is basis-dependent,  while the LHS of the inequality is not.  Hence any fixed basis $M$ yields an upper bound on the encoding rate but the bound will only be saturated when all $M_i$ are orthogonal and of similar length.  We have already observed in Claim \ref{claim} that interesting concatenated codes do not yield orthogonal lattices,  such that in these cases the bound will not hold tightly.  

\subsection{Minimal bases for concatenated GKP codes}
As an example of an interesting application of lattice basis reduction,  we present a procedure to find a minimal basis for the lattice obtained as concatenation of the single square-GKP code with the 
\emph{Surface-17 code} in Fig.~\ref{fig:reduct_SurfGKP}.  This basis is derived by first noting that $2n$ stabilizers suffice to fully specify the code,  in contrast to the usual $3n-k$ generators of the union individual GKP stabilizers and concatenated stabilizers.  Starting from this set,  it is clear that removing any of the $r=n-k$ independent qubit stabilizers from the union of generators necessarily changes the dimension of the code space.  The redundant stabilizers,  that can be omitted without affecting the encoded subspace,  must hence be found among the single-mode GKP ones.  We identify the redundant single-mode GKP stabilizers as follows.  Squaring the concatenated qubit code stabilizers $S_i^{Z/X}$ yields a product of square GKP-code stabilizers,  such that not all GKP stabilizer overlapping on $\text{supp}\lr{S_i^{Z/X}}$ need to be included on that local region to generate the full local stabilizer group,  e.g.  in Fig.  \ref{fig:reduct_SurfGKP} one can obtain $S_2^p$ by multiplying $(S_3^X)^2$ with $S_3^p$,  such that locally $S_3^p$ and $S_3^X$ suffice to reconstruct the full $p-$type syndrome.  We note that the pattern of the included local GKP-code stabilizers depicted in Fig.~\ref{fig:reduct_SurfGKP} can be extended to larger surface code layouts.


In general,  given a concatenated square GKP code we obtain a procedure to find a minimal generating set as detailed in Figs.~\ref{fig:lattice completion},  \ref{fig:membership},  \ref{fig:membership_HNF}.  Let $B=B_Q \in \mathbb{Z}_2^{r\times 2n}$ be a minimal generator matrix for the qubit code.  We iterate through all GKP-stabilizer generators and check whether they are already member of the lattice generated by $B$.  If not,  we append this generator to the basis and proceed to the next.  This algorithm will terminate once $2n-r$ GKP stabilizer generators are added.  Importantly,  it preserves the qubit stabilizers and hence the LDPC property.  If the qubit code had a generating set with maximal weight $w$,  the  longest basis vector will be of length $\max\{\sqrt{{w}/{2}}, \sqrt{2}\}$.  Note that the membership test outlined in Fig.~\ref{fig:membership_HNF},  that applies when dealing with rational lattice (in which case the lattice can be rescaled to integer,  such that the HNF is unique),  can be executed in polynomial time.  
The above procedure can trivially be generalized to the concatenation of a qubit code with a different single-mode GKP code.  In particular if we use single mode GKP codes that are locally described by generator matrix $M_{\mathrm{GKP}}$,  the algorithm outlined in Fig.~\ref{fig:lattice completion} is adapted by iterating through the rows of $M_{\mathrm{GKP}}^{\oplus n}$ and matching the non-trivial entries in $B_Q$ to elements in $\lr{\CL\lr{M_{\mathrm{GKP}}}^{\perp}}^{\oplus n}$.  A \textit{python} implementation of the algorithm can be found under ref.~\cite{lattice_completion_code}.  

We conclude this section by noting a practical advantage following from the existence of minimal bases.  In the literature,  an error correction cycle for concatenated GKP codes typically contains two steps: first all single-mode GKP stabilizers are measured and the state is corrected to the corresponding sublattice.  Then all qubit level stabilizers are measured and the syndrome is fed to a modified qubit decoder,  possibly taking into account the analogue measurement information.  The existence of a minimal set of $2n$ lattice basis vectors implies that checking whether a given state is a code-state can be done with only $2n$ measurements.  This has to be contrasted with the usual $2n + n-k$ measurements in typical two-step schemes.  Using a minimal basis foregoes measuring the GKP stabilizers whose eigenvalues are already fixed by the results on the minimal set,  and can thus lead to a reduction of about $1/3$ in the number of (GKP) ancillas required for an error correction cycle. 
Although the eigenvalues of the redundant,  unmeasured,  GKP stabilizers can be deduced from the other measurements,  in the presence of measurement errors other strategies might be more effective.

\begin{figure}
\center
\includegraphics[width=.5\textwidth]{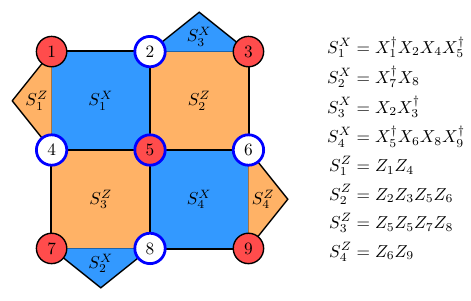}
\caption{The Surface-17 code with a minimal generating set.  Multi qubit stabilizer generators are indicated in the figure.  The minimal number of additional square-GKP stabilizer are given by $S^p_i=e^{-2i\sqrt{\pi}\hat{p}_i}$ on the modes labeled by red filled circles and  $S^q_i=e^{i2\sqrt{\pi}\hat{q}_i}$ on the vertices drawn in blue.  It suffices to measure only one type of square GKP generator on most modes,  except for mode 5,  where both $S_5^q$ and $S_5^p$ need to be measured.  The $\cdot^{\dagger}$'s are chosen such that all qubit stabilizers commute on the full bosonic Hilbert space \cite{toricGKP},  i.e.  such that $B_QJB_Q^T =0$. \protect\footnotemark We follow the convention for the stabilizers in ref.~\cite{surfGKP}.}  \label{fig:reduct_SurfGKP}
\end{figure}

\newtcolorbox{mybox}{colback=white, colframe=red!45!blue}

\begin{figure}
\footnotesize
\begin{minipage}[t]{.5\textwidth}{ $\;$}
\vspace{-1.54cm}
\begin{mybox}
\begin{algorithmic}[H]
\caption{Greedy lattice completion algorithm}\label{fig:lattice completion}
\Function{LatticeCompletion}{$B_Q$} 
\State set $B=B_Q$
\For{$i=1,\dots, 2n$ }
\If{ \textbf{not} Member$\lr{2\bs{e}_i^T; B}$}
\State $B\leftarrow \begin{pmatrix}
B  \\ \hline 
2\bs{e}_i^T 
\end{pmatrix}$ 
\EndIf
\EndFor
\State Return $\frac{1}{\sqrt{2}}B$
\EndFunction
\vspace{.13cm}
\end{algorithmic}
\end{mybox}
\end{minipage}\hspace{-.1cm}
\begin{minipage}[t]{.5\textwidth}
\begin{minipage}{\textwidth}
\vspace{-0.3cm}
\begin{mybox}
\begin{algorithmic}[H]
\caption{Membership test}\label{fig:membership}
\Function{Member}{$\bs{x};\, B$}
\State Return $\exists \bs{x} \in \mathbb{Z}^{2n}:\; \bs{x}^TB=2\bs{e}_i^T$
\EndFunction
\end{algorithmic}
\end{mybox}
\end{minipage}
\begin{minipage}{\textwidth}
\vspace{-0.3cm}
\begin{mybox}
\begin{algorithmic}[H]
\caption{Membership test (alternative)}\label{fig:membership_HNF}

\Function{Member}{$\bs{x};\, B$}
\State Return $\HNF\begin{pmatrix}
B  \\ \hline 
2\bs{e}_i^T 
\end{pmatrix}
=\HNF\lr{B}
$
\EndFunction

\end{algorithmic}
\end{mybox}
\end{minipage}
\end{minipage}
\end{figure}

\footnotetext[3]{For homological codes derived from \emph{Kitaev's quantum double} models \cite{Kitaev_2003},  these can be identified by considering local groups $\mathbb{R}$ and looking at the 
direction of the arrows.}

\section{Symplectic equivalence\label{sec:symplectic}}
By definiton of the symplectic Gram matrix,  it remains invariant under symplectic transformations of the GKP code $M\mapsto N=MS^T$.  Symplectic transformations are physically implemented by Gaussian unitaries which have a special role in bosonic quantum information processing,  for a number of reasons,  including the fact that they are often the easiest to implement in the laboratory.  As far as error correction is concerned,  the fact that they act linearly on the quadratures implies that potential errors,  such as small displacements on the modes,  remain bounded in their amplitude.  
By construction of the GKP code,  the logical Clifford group can be implemented using only symplectic transformations.\footnote{However,  not every symplectic transformation necessarily corresponds to a Clifford operation on the logical level.}  Here, we show that two GKP codes are equivalent under symplectic transformations if and only if their standard form (introduced below) is equal and show that every code is symplectically equivalent to a standard form where each stabilizer acts on a single mode.   This is a generalization of the symplectic equivalence of the square- and hexagonal GKP code noticed in ref.~\cite{Haenggli_2020}.
We say a symplectic Gram matrix is in standard form if 
 \begin{equation}
A=J_2 \otimes D,
\end{equation}
where $D$ is a positive diagonal $n \times n$ integer matrix with non-increasing diagonal elements.  A symplectically integral matrix $A$ can always be brought into standard form via basis transformation  \cite{GKP}. 

\begin{them}[Symplectic equivalence]\label{thm:symplectic_equiv}
Let $M,  N\, \in \mathbb{R}^{2n\times2n}$ be generators of full rank lattices $\CL\lr{M}, \,\CL\lr{N} \subset\mathbb{R}^{2n}$ .  Without loss of generality,  assume the bases $M,  N$ are chosen such that $A_M=MJM^T=J_2\otimes D_M$ and $A_N=NJN^T=J_2\otimes D_N$ are in standard form.  Then $M=NR^T$ for some symplectic matrix $R$ if and only if $D_N=D_M=:D$.
\end{them}

\proof{
From the definition of $A$ it is clear that $M=NS^T$ for some symplectic $S: S^TJS=J$ implies that $A_N=A_M$ or equivalently $D_N=D_M$.  
To prove the converse,  let 
\begin{equation}
Z:=I_2 \otimes D^{-\frac{1}{2}}, \hspace{.5cm} \tilde{M}:=ZM,\hspace{.5cm} \tilde{N}:=ZN.
\end{equation}
We have 
\begin{equation}
\tilde{M}J\tilde{M}^T=\tilde{N}J\tilde{N}^T=J,
\end{equation}
i.e.,  both $\tilde{M},  \tilde{N}$ are symplectic.  Since symplectic matrices are invertible,  we define $R^T:= \tilde{N}^{-1}\tilde{M}$ such that $\tilde{M}=\tilde{N}R^T$.  $R$ is symplectic since the symplectic 
 group is closed under transpose and inverse.  Left-multiplying by $Z^{-1}$ we obtain $M=NR^T$. \qed
}%

It is important to note that since the symplectic equivalence above is established using the bases $M,  N$ chosen such that $A_M,  A_N$ are in standard form,  one needs to be careful in accounting for possible phases in front of the associated stabilizer operators stemming from the basis transformation.  When the above assumptions of the theorem above are satisfied we obtain that the stabilizer group 
$ \mathcal{S}\lr{N,\,  0} $ can,  by symplectic transformation $R^T$, be transformed into
$ \mathcal{S}\lr{M,\,  \phi_{NR^T}}. $
I.e.  code states $\ket{\psi}$ of the code  $\mathcal{S}\lr{N,\,  0}$ can be transformed into states $U_{R^{-T}}\ket{\psi}$ by means of a Gaussian unitary (see eq.  \eqref{eq:symplDisplTrasf}) which are stabilized by phased displacement operators 
\begin{equation}e^{i \phi_{NR^T}\lr{M_i^T}} D\lr{M_i^T}, \, i=1,\hdots, 2n.
\end{equation}
Note that phases above are absent in the symplectic  transformation between a single-mode square GKP code and hexagonal GKP code shown in ref.~\cite{Haenggli_2020},  which is because each of these codes have symplectically even lattices,  such that the overall phase factors are trivial.  The same discussion on phase-assignement is relevant for the application of the following immediate corollaries.

\begin{cor}[Normal form of generators]
Let $M\in \mathbb{R}^{2n\times2n}$ describe the basis of a full rank lattice with integral $A=MJM^T$.  W.l.o.g. assume $M$ is chosen such that $A=J_2\otimes D$ is in standard form.  The code specified by the generator
\begin{equation}
N_{\square}=\oplus_{j=1}^n \sqrt{D_{j j}} I_2
\end{equation}
is symplectically equivalent to the one specified by $M$.  
\end{cor}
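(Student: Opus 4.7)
The plan is to reduce the corollary directly to Theorem \ref{thm:symplectic_equiv} by showing that the generator $N_\square$ produces a symplectic Gram matrix already in standard form, namely the same standard form $J_2\otimes D$ as $M$. Since symplectic equivalence has been characterized by coincidence of the standard form, the corollary will follow immediately.

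First, I would unpack the direct sum notation. Writing $N_\square=\bigoplus_{j=1}^n \sqrt{D_{jj}}\, I_2$ in the $(\hat{\bs{q}},\hat{\bs{p}})$-ordering used throughout the paper, where $J=J_2\otimes I_n$, the operator acts on both quadratures of mode $j$ by the same scalar $\sqrt{D_{jj}}$. Up to the reordering between mode-ordering and $q$-then-$p$ ordering, this is the same as the diagonal matrix
\begin{equation}
N_\square = I_2\otimes \sqrt{D},
\end{equation}
where $\sqrt{D}=\mathrm{diag}(\sqrt{D_{11}},\dots,\sqrt{D_{nn}})$. This is the step where I would be careful, since the identification between the direct sum over modes and the tensor product in $(\hat{\bs{q}},\hat{\bs{p}})$-ordering is the only place where convention can go wrong, but it is otherwise routine.

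Next, I would compute the symplectic Gram matrix of $N_\square$ using the mixed-product property of the Kronecker product:
\begin{equation}
A_{N_\square} = N_\square\, J\, N_\square^T = (I_2\otimes \sqrt{D})(J_2\otimes I_n)(I_2\otimes \sqrt{D})^T = J_2\otimes D.
\end{equation}
Hence $A_{N_\square}$ is already in standard form with the same diagonal matrix $D$ as $A_M=J_2\otimes D$.

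Finally, I would invoke Theorem \ref{thm:symplectic_equiv}: since $M$ and $N_\square$ both have symplectic Gram matrix in standard form with identical $D$, there exists a symplectic $R$ such that $M=N_\square R^T$, i.e., the two codes are symplectically equivalent. As noted in the discussion following Theorem \ref{thm:symplectic_equiv}, this equivalence must be understood at the level of $\mathcal{S}(N_\square,\phi_{MR^{-T}})$ (or the corresponding phase sector) rather than naively $\mathcal{S}(N_\square,0)$, so strictly speaking the statement is about equivalence of codes up to the induced phase factors. The only real work is the elementary computation above; the main conceptual point is that the standard form of $A$ is an invariant of the symplectic-equivalence class, and the single-mode generator $N_\square$ trivially realizes it.
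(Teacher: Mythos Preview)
Your argument is correct and is exactly the approach the paper intends: the corollary is stated without proof because it follows immediately from Theorem~\ref{thm:symplectic_equiv} once one checks that $N_\square J N_\square^T = J_2\otimes D$, which is the computation you carry out. The only (minor) wrinkle is your parenthetical phase-sector remark: following the paper's convention, $\mathcal{S}(N_\square,0)$ is mapped by $R^T$ to $\mathcal{S}(M,\phi_{N_\square R^T})=\mathcal{S}(M,\phi_M)$, not to $\mathcal{S}(N_\square,\phi_{MR^{-T}})$; but this is peripheral to the corollary itself and does not affect the validity of your proof.
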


Note that here, we understand the direct sum as happening mode-wise,  such that it corresponds to the usual matrix direct sum if the quadratures are ordered as $\lr{\hat{q}_1,\hat{p}_1,\hat{q}_2,\hat{p}_2,\ldots}$ with symplectic form $J_{2n}'=I_n\otimes J_2$ but not if they are ordered as $\lr{\hat{q}_1,\hat{q}_2,\ldots,\hat{p}_1,\hat{p}_2,\ldots}$ in the standard choice $J_{2n}=J_2\otimes I_n$.  This convention will be understood in the following whenever clear from the context.  Note that $N_{\square}$ is diagonal,  so its stabilizers act each on a single quadrature at a time. 
Since $N_{\square}$ decomposes into a direct sum over each mode,  we can always prepare a code state of $M$ by locally preparing a code state of $N_{\square}$ and applying the corresponding symplectic transformation.

\fa{I am no longer sure this is true: what happens if the code state is a superposition?}\jc{we'd still obtain a valid code state -- To fix \emph{which} code state is prepared it would be necessary to check if $M^{\perp}$ admits a transformation into decomposing basis using the same $S$ which I think is the case,  which you can see by using $A^{-1},\; D^{-1}$ in the proof above.  }  
The same symplectically equivalent local decomposition can equally be done using the hexagonal GKP code on each local mode and is based on the fact that the square and hexagonal GKP code,  as scaled GKP codes with scaling factor $\lambda$,  have both $A_{\square}=A_{\mhexagon}=\lambda J_2$.  Similarly,  any other scaled GKP codes can also be used.  This is an example of \textit{code switching}.  Even more generally,  this implies code switching between two GKP codes $\CL\lr{M}$ and $\CL\lr{N}$ on the same number of modes is possible whenever $D_N=D_M$ in their respective standard bases.
Similarly,  we can also deduce the following corollary.

\begin{cor}[Normal form in prime dimensions]\label{thm:symplectic_equiv_cor}
For $d=|\det(M)|$ prime,  the lattice $\CL \subset \mathbb{R}^{2n}$ is symplectically equivalent to a code specified by
\begin{equation}
N'=\sqrt{d}I_2 \oplus I_2^{\oplus (n-1)}, \label{eq:symp_single}
\end{equation} 
\end{cor}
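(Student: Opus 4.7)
The plan is to reduce this corollary to a direct application of Theorem~\ref{thm:symplectic_equiv}. First I would use the fact, invoked in the lead-up to that theorem, that any symplectically integral $A=MJM^T$ can be brought to standard form $A_M = J_2 \otimes D_M$ via a basis transformation $M \mapsto UM$ with $U\in GL(2n,\mathbb{Z})$; since such transformations preserve $|\det(M)|$, the determinant of $D_M$ (which is positive integer diagonal with non-increasing diagonal entries) must satisfy $\det(D_M)^2 = \det(A_M) = |\det(M)|^2 = d^2$, hence $\det(D_M) = d$.

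The next step is the arithmetic input: since $D_M = \operatorname{diag}(D_{11}, \ldots, D_{nn})$ has positive integer entries in non-increasing order and $\prod_j D_{jj} = d$ is prime, the only factorization available is $D_{11} = d$ and $D_{jj} = 1$ for $j = 2, \ldots, n$. Thus $D_M = \operatorname{diag}(d, 1, \ldots, 1)$ is forced.

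Then I would compute the symplectic Gram matrix of $N'$ directly. With the mode-wise ordering $(\hat q_1,\hat p_1,\hat q_2,\hat p_2,\ldots)$ convention spelled out in the remark following the previous corollary, $N'$ is block-diagonal and the associated symplectic form splits as $J_{2n}' = I_n \otimes J_2$, so
\begin{equation}
A_{N'} \;=\; N' J_{2n}' N'^T \;=\; \bigoplus_{j=1}^{n} D_{jj}\, J_2 \;=\; (\sqrt{d}\,I_2)J_2(\sqrt{d}\,I_2) \,\oplus\, \bigl(J_2\bigr)^{\oplus (n-1)},
\end{equation}
which, after the coordinate permutation that interchanges the mode-wise and standard orderings, is precisely $J_2 \otimes D_M$. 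Hence $D_{N'} = D_M$.

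Finally, Theorem~\ref{thm:symplectic_equiv} immediately yields a symplectic $R$ with $M = N' R^T$, so $\CL$ and $\CL(N')$ are symplectically equivalent, with the usual caveat that the resulting stabilizer group appears in the phase-adjusted form $\mathcal{S}(N', \phi_{MR^{-T}})$ since the basis transformation used to diagonalize $A_M$ may shift the phase sector. I do not expect any serious obstacle: the non-trivial content is entirely carried by Theorem~\ref{thm:symplectic_equiv}; the main care is in bookkeeping between the quadrature- and mode-wise conventions when verifying $A_{N'} = J_2 \otimes D_M$, and in invoking the elementary number-theoretic observation that $d$ prime forces the rank-one structure of $D_M$.
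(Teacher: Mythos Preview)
Your proposal is correct and follows essentially the same route as the paper: put $A$ in standard form $J_2\otimes D$, use $\prod_j D_{jj}=d$ prime to force $D=\diag(d,1,\ldots,1)$, and invoke Theorem~\ref{thm:symplectic_equiv}. The paper's proof is just a terser version of yours, omitting the explicit computation of $A_{N'}$ (which is implicit in the preceding corollary on the normal form of generators) and the phase-sector remark (which it handles in the surrounding discussion rather than inside the proof).
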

\begin{proof}
For a basis of $\CL$ in standard form,  we have %
\begin{equation}
d=\prod_{j=1}^{n}D_{j,  j}.
\end{equation}%
Since $D$ is a positive integer diagonal matrix,  if $d$ is prime,  the eigenvalues of $D$ are uniquely specified by $D=\diag(d, 1,1, \hdots)$ up to permutations.  Hence by Theorem \ref{thm:symplectic_equiv} we have that $\CL$ is symplectically equivalent to the code specified by eq.~\eqref{eq:symp_single}.
\end{proof}
From the proof above one can also see that the number of symplectically inequivalent classes of codes with given logical dimension $d$ corresponds to the number of different factorizations of $d$. For example, for two modes encoding two qubits there are two inequivalent choices: $D = \mathrm{diag}\lr{4,1}$ or $D = \mathrm{diag}\lr{2,2}$, corresponding to two symplectically inequivalent classes of codes.

\section{Distance of a GKP code}\label{sec:distance}

We need a benchmark to assess how well the logical content of a GKP state is protected from noise.  
In qubit codes,  a simple standard assumption is that noise is stochastic i.i.d.  for each physical qubit,  such that the likelihood of an error decreases exponentially with its support or weight.  This assumption makes the definition of a code distance as weight of the shortest non-trivial logical operator meaningful.  This model no longer makes sense in the bosonic setting.  In the bosonic setting,  a reasonable assumption about the underlying error model is that weak coupling to the environment results in effectively small displacements.  \jc{``Such errors give rise to Gaussian quantum channels if the
displacements themselves are drawn according to a Gaussian distribution \cite{GaussianChannel}. " sounds like a logical loop.  Also,  the justification for the distance is that we generally have an error model that favors small displacements -- it doesnt matter if those are drawn from a Gaussian distribution.}
Concrete examples are loss,  thermal noise \cite{Noh_Capacity},  and to a limited  extent,  finite squeezing errors \cite{Glancy_2006,  Tzitrin_2020, Terhal_2020}. 
For simplicity,  we assume a stochastic displacement noise model such that small displacements are more likely than large displacements,  this is in particular the case for the Gaussian displacement noise model
\begin{equation}
\mathcal{N}\lr{\rho}= \int d\bs{x}\; P_{\tilde{\sigma}}\lr{\bs{x}} D\lr{\bs{x}}\rho D^{\dagger}\lr{\bs{x}},\; \label{eq:noise_channel}
\end{equation}
 which we will analyse more in-depth later,  where 
\begin{equation}
P_{\tilde{\sigma}}\lr{\bs{x}}=G\lrq{0, \tilde{\sigma}^2 I_{2n}}\lr{\bs{x}}\propto e^{-\frac{\|\bs{x}\|^2}{2\tilde{\sigma}^2}},
\end{equation}
that displaces the quadratures of the state by some random amplitude drawn from a centered Gaussian distribution with variance $\tilde{\sigma}^2$.  Although this error model is widely used in the analysis of GKP codes \cite{toricGKP, surfGKP, Haenggli_2020} for its simplicity,  one needs to be careful to note that in real implementations of GKP error correction,  this is not a physically accurate model \cite{Terhal_2020, Conrad_2021} in general,  but only reproduces the correct measurement statistics in specific cases in practice,  such as when the finite squeezing error is applied to a perfect GKP state and interpreted as channel or when GKP states undergo a $0-$photon loss event.  The distance measure that we introduce in the following merely assumes that larger phase-space displacements are more likely than smaller ones,  which is a natural choice and is sufficiently meaningful to indicate the robustness of the code with respect to realistic noise sources as also demonstrated in recent experiments \cite{Campagne_Ibarcq_2020}.  
\je{I find the terminology of the ``shortest non-trivial logical operator'' somewhat unfortunate,
as strictly speaking it is a vector that is the shortest.}\jc{agree,  but I couldn't come up with anything better that doesn't sound way to bulky.  I think it is clear what is meant.}

\begin{mydef}[Euclidean distance of a GKP code]
The (euclidean) distance $\Delta$ of a GKP code given by lattice $\CL$  is the euclidean length of the shortest non-trivial logical operator,  i.e.,
\begin{equation}
\Delta=\Delta \lr{\CL} := \min_{0\neq \bs{x}\in \CL^{\perp}/\CL} \|\bs{x}\|. \label{eq:def_distance}
\end{equation}
\end{mydef}

For CSS GKP codes,  i.e.,  for which we have
$\CL=\CL_q \oplus \CL_p$ as specified earlier,  we can further distinguish 
\begin{equation}
\Delta_q := \min_{0\neq \bs{x}\in \CL_p^*/\CL_q} \|\bs{x}\|,
\end{equation}
\begin{equation}
\Delta_p := \min_{0\neq \bs{x}\in \CL_q^*/\CL_p} \|\bs{x}\|.
\end{equation}

\subsection{Tradeoffs and bounds for GKP codes}
Before discussing how the distance of GKP codes can be computed in general,  we present a few results that relate the distance of GKP codes to other code properties such as number of modes $n$ of the code,  the encoded logical dimension $d=2^k$ and size of stabilizers.

%
\begin{lem}[Distance bound]\label{thm:distance_bound}
Remember that $\lambda_1\lr{\CL^{\perp}}$ denotes the shortest non-zero vector in the dual lattice.  We have 
\begin{equation}
\Delta \geq \lambda_1\lr{\CL^{\perp}}.
\end{equation}
\end{lem}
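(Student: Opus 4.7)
The plan is to unravel the definition of $\Delta$ and show that every vector appearing in the minimization is a nonzero element of $\CL^{\perp}$, so the minimum is bounded below by $\lambda_1(\CL^{\perp})$.

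First I would note that because the GKP lattice $\CL$ is symplectically integral (Definition~\ref{def:GKP}), one has the inclusion $\CL \subseteq \CL^{\perp}$, so $\CL^{\perp}/\CL$ is a well-defined finite abelian group. The quantity $\Delta$ in eq.~\eqref{eq:def_distance} is the minimum Euclidean length achieved over all representatives of all \emph{non-trivial} cosets $\bs{x}+\CL \in \CL^{\perp}/\CL$ with $\bs{x}+\CL \ne 0+\CL$, i.e., it equals
\begin{equation}
\Delta = \min\lrc{\|\bs{x}\|\,:\,\bs{x} \in \CL^{\perp}\setminus \CL}.
\end{equation}
Indeed, picking any representative from a non-trivial coset gives a vector in $\CL^{\perp}\setminus \CL$, and conversely any such vector represents a non-trivial coset.

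Next I would simply observe the chain of inclusions
\begin{equation}
\CL^{\perp}\setminus \CL \;\subseteq\; \CL^{\perp}\setminus\{0\},
\end{equation}
which holds because $0 \in \CL$. Taking the minimum Euclidean length over a smaller set can only increase it, so
\begin{equation}
\Delta = \min_{\bs{x}\in \CL^{\perp}\setminus \CL} \|\bs{x}\| \;\geq\; \min_{\bs{x}\in \CL^{\perp}\setminus\{0\}} \|\bs{x}\| = \lambda_1(\CL^{\perp}),
\end{equation}
which is the claimed bound.

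There is essentially no obstacle here; the statement is an immediate consequence of the definition of $\Delta$ combined with the symplectic integrality of $\CL$. The only point that requires a line of care is the reinterpretation of $\min_{0\neq \bs{x}\in \CL^{\perp}/\CL}\|\bs{x}\|$ as a minimum over lattice vectors in $\CL^{\perp}\setminus \CL$ rather than over abstract coset elements, but this is the natural reading already adopted implicitly when speaking of the ``length'' of a coset, and it is unambiguous because the lengths of representatives of the trivial coset include the value $0$ while those of any non-trivial coset do not.
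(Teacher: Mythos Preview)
Your argument is correct and matches the paper's own proof, which simply notes that a minimizing vector for $\Delta$ is in particular a nonzero element of $\CL^{\perp}$, so the bound is trivial. You have spelled out the same idea with a bit more care about what the quotient notation means.
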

\begin{proof}
Because the lattice vector  $0\neq \bs{x}\in \CL^{\perp}/\CL$ for which $\|\bs{x}\|$ is minimal is also in $\CL^{\perp}$ this holds trivially.
\end{proof}%
We have already seen in the previous section that for GKP codes obtained from scaling a symplectically self-dual lattice $\CL_0=\CL\lr{M_0}$ to $\CL=\sqrt{\lambda}\CL_0$ we have 
\begin{equation}
\Delta_\mathrm{scaled}=\lambda_1\lr{\CL^{\perp}}=\lambda^{-\frac{1}{2}}\lambda_1\lr{\CL_0},
\end{equation}
so the distance decreases while the size of the stabilizers and number of encoded logical dimensions increases.
In particular,  scaled codes are not straightforwardly  \textit{scalable} in the sense that we cannot increase the distance by adding generators while keeping the encoded logical dimensions fixed.   It has been shown \cite{Sarnak1994}, however   (see also ref.~\cite{Harrington_Thesis}), 
that symplectically self dual lattices in $\mathbb{R}^{2n}$ with $\lambda_1\geq 2\sqrt{\frac{n}{\pi e}}$ exist,  implicating that it is possible to find sequences of codes that can be scaled to \emph{achieve} rates $R=\log_2{\lambda}$,
\begin{equation} \lambda =\floor{\frac{\hbar}{2 \tilde{\sigma}^2}}
\end{equation}
for transmission through stochastic Gaussian displacement channels with variance $\tilde{\sigma}^2$~\cite{HarringtonPreskill}.
\fa{what's $\tilde{\sigma}$?}\jc{the noise variance for Gaussian channel,  another reason why we should introduce gaussian channel at the beginning of this section.  Also I would not say ``families".  Bavard and Sarnaks result only says that in each dimension you can find symplectic self dual lattices with sufficiently large distance in each dimension.  Harrington talks about a family of lattices but what he means is his parametrization used for brute-force search.  This is not what is viewed as a ``code family'' in QEC. }\fa{How about ``sequences''? I totally agree on introducing the Gaussian noise model, the point of my comment above was just to avoid subordinating the notion of distance to this specific noise channel, as it still makes sense for any channel where small displacements are more common than large ones.}\fa{I find it a little confusing that the rate how we intend it is a property of the code, independent of the noise model.  I guess the keyword here is ``achievable'', in the sense of channel coding. Idk maybe it's just me but it took me a while to figure out why the encoding rate here depends on the noise.}\jc{yes,  Harringtons proof is about achievable rates in the channel coding sense.}
On the other hand,  for the concatenation of a single mode GKP code into a $[\![n,k,d]\!]$ qubit quantum error correcting code,  we have
\begin{equation}
\Delta_{\text{conc}}\geq\sqrt{d}\Delta_{\text{loc}}, \label{eq:conc_dist}
\end{equation}
where $\Delta_{\text{loc}}$ is the distance of the local single mode GKP codes,  such as $\Delta_{\square}=2^{-\frac{1}{2}}$ for the square GKP code or $\Delta_{\mhexagon}=3^{-\frac{1}{4}}$ for the hexagonal GKP code.  Equality in eq.~\eqref{eq:conc_dist} holds when the code we consider is CSS or the local code is the hexagonal GKP code (such that all shortest non-trivial logical vectors have the same distance).  Eq.  \eqref{eq:conc_dist} is verified by decomposing a shortest representative non-trivial logical vector $\bs{L}=\oplus_{i=1}^n \bs{l}_i $ into $n$ local sub-blocks,  where $\bs{l}_i$ is a logical operator of the respective local code,  such that we have $\|\bs{L}\|^2\geq d \Delta_{\text{loc}}^2$.

 The distance of a concatenated code is typically strictly larger than $\lambda_1\lr{\CL^{\perp}}$,  because the shortest dual vectors would correspond to stabilizer displacements,  in particular if the input qubit stabilizer code is LDPC \fa{why for LDPC codes?}\jc{because the stabilizers have low constant weight while logicals are typically designed to span something that grows with the block size.}.  This is e.g.  the case for the concatenation of the square GKP code with $\textit{surface-17}$ (see Fig.  \ref{fig:reduct_SurfGKP}),  where we have $\Delta=\sqrt{{3}/{2}}$,  while  $\lambda_1\lr{\CL^{\perp}}=\lambda_1\lr{\CL}=1$.  

The length of the shortest vector in a lattice $\lambda_1\lr{\CL}$ is the first of $2n$ successive minima of the lattice.  Generally,  the $i$th successive minimum $\lambda_i\lr{\CL},  \, i\leq 2n$ is defined to be the smallest $r>0$,  such that $\CL$ contains $i$ linearly independent vectors of length at most $r$.  It holds that $\mu\lr{\CL}\geq {\lambda_{2n}\lr{\CL}}/{2}$,  where $\mu\lr{\CL}$ is the \textit{covering radius} of the lattice $\CL$,  i.e.,  the minimum radius $\mu$,  such that the union of closed balls $\mathcal{B}_{\mu}\lr{\bs{x}},  \,  \bs{x} \in \CL$ centred around each lattice point of $\CL$ cover the entire space $\mathbb{R}^{2n}$.  A related quantity is the \textit{packing radius} of the lattice $\rho\lr{\CL}={\lambda_1\lr{\CL}}/{2}$.  \\
Successive minima of the direct and dual lattice $\CL^*$ are related by so-called transference theorems,  in particular we have
\begin{align}
1\leq \lambda_1\lr{\CL}\lambda_{2n}\lr{\CL^*}\leq 2n. \label{eq:transference}
\end{align}
We have noticed earlier that $\CL^{\perp}$ and $\CL^*$ differ only by an orthogonal transformation,  therefore it holds that 
\begin{equation}
\lambda_i\lr{\CL^{\perp}}=\lambda_i\lr{\CL^*}
\end{equation}
and we can apply eq.~\eqref{eq:transference} to relate the distance to the length of stabilizer vectors.  

\begin{them}[Distance bound]
For a GKP code with lattice $\CL$,  distance $\Delta$ and maximal length $C$ of a basis vector for a fixed basis $M$ we have
\begin{equation}
\Delta\geq \lambda_1\lr{\CL^{\perp}}\geq \lambda_{2n}^{-1}\lr{\CL}\geq C^{-1},\label{eq:dist_C_bound_1}
\end{equation}
as well as
\begin{equation}
\Delta \leq \lambda_{2n}\lr{\CL^{\perp}}\leq \frac{2n}{\lambda_1\lr{\CL}}. \label{eq:dist_C_bound_2}
\end{equation}
\end{them}
\begin{proof}
The first bound follows immediately from Lemma \ref{thm:distance_bound}  and eq.~\eqref{eq:transference} by swapping the roles of $\CL^{\perp}$ and $\CL$,  which is possible because $\lr{\CL^{\perp}}^{\perp}=\CL$ and further from $C\geq \lambda_{2n}\lr{\CL}$.  Similar for the second bound.  
\end{proof}
These bounds indicate an intimate relation between the lengths of the stabilizers in $\CL$ and the distance of the GKP code,  which we will expand on further in the next section.  

\subsection{Symplectic transformations}

It is interesting to study how symplectic transformations change the code distance.  In particular,  since we have seen in Theorem \ref{thm:symplectic_equiv} that GKP codes with equal standard form are symplectically equivalent,  it is possible that for specific noise models a symplectic transformation of the stabilizers $M\mapsto MS^T$ can be used to improve the codes resilience to noise.  It is clear that orthogonal (symplectic) transformations $S$  satisfying $S^TS=I$ leave the distance invariant.

 This is not the case for squeezing.  In particular if we consider a uniform squeeze
\begin{equation}
S=\eta I_n \oplus  \eta^{-1} I_n, \quad \eta \in (0, \infty)
\end{equation}
applied to a CSS code,  we have
\begin{equation}
\Delta_q \mapsto \Delta_{q,\eta}= \eta^{-1}\Delta_q,\hspace{1cm} \Delta_p \mapsto  \Delta_{p,\eta} =\eta\Delta_p,
\end{equation}
so that we can adapt the squeezing to account for potential bias in the noise,  e.g.  when
\begin{equation}
P\lr{\bs{x}}=G[0,  \tilde{\sigma}^2 \lr{ \eta^{-1} I_n \oplus \eta I_n } ] \lr{\bs{x}}.
\end{equation}
For concatenated CSS GKP codes, we generally have 
\begin{equation}
\Delta_q=\sqrt{d_Z}\Delta_{q,  \mathrm{loc}},  \hspace{1cm} \Delta_p=\sqrt{d_X}\Delta_{p,  \mathrm{loc}},
\end{equation}
such that squeezing the local codes by $\eta$ is equivalent to increasing (decreasing) the upper level $X/Z-$ distances $d_{X/Z}$ by a factor of $\eta^2$.  On the other hand,  for natively unbiased noise it is  also possible to squeeze the local code and employ a qubit quantum error correcting codes tailored towards biased noise such that $\Delta_q = \Delta_p$ remains constant.  Although such a setup leaves the distance invariant,  it can still lead to improvements in the error correction procedure when dedicated decoders for the qubit error correcting codes are used,  as was recently demonstrated in ref.~\cite{Haenggli_2020}.

Corollary~\ref{thm:symplectic_equiv_cor} can also be used to derive an upper bound on the distance of a given code from symplectic equivalence as follows.  Let us first consider a code $\mathcal{L}$ with generator $M$,  encoding a single qubit within $n$ modes.  Suppose $M$ is in canonical form,  such that \begin{equation}
A = MJM^T = J_2 \otimes D\; \mathrm{with }\quad D = \mathrm{diag}\lrc{2,1,\hdots,1 }.
\end{equation} 
Corollary \ref{thm:symplectic_equiv_cor} implies that there exists a symplectic matrix $S$ such that $M = N_\square S^T$ with $N_\square $ the generator of the code in the corollary with logical dimension two.  Since $N_\square $ is diagonal,  the corresponding lattice $\mathcal{L}_\square$ is trivially orthogonal,  and so is the dual $\mathcal{L}_\square^\perp$. The shortest non-trivial logical operators are thus immediately found as $\bs{\eta}_{\square, 1}^T = \lr{1/\sqrt{2},\bs{0}_{2n-1}^T}$ and $\bs{\eta}_{\square,2}^T = \lr{0,  -1/\sqrt{2},  \bs{0}_{n-2}^T,}$.  We can now recall that commutation relations for the displacements are related to symplectic products of the corresponding phase space vectors and that $S$  preserves symplectic products.  Hence the transformation $S$ maps all points in $\CL^\perp_\square$ that correspond to stabilizers,  $\bs{x}\in\CL_{\square} \subset \CL^\perp_\square $, to direct lattice points $S\bs{x}\in\CL$,  which are stabilizers of $\CL$,  and all non-trivial logical operators to non-trivial logical operators.  Therefore, we can readily write down two logical operators,  $\bs{\eta}_1,  \bs{\eta}_2 \in \CL^{\perp}\setminus \CL$,  as $\bs{\eta}_j = S\bs{\eta}_{\square,  j} $ whose lengths are \begin{equation}
\norm{\bs{\eta}_j } = \norm{S\bs{\eta}_{\square,j}} \geq \Delta.
\end{equation}  
Note that the bound is not necessarily tight.  One can further relate the distance to the \emph{squeezing} contained in $S$.  Applying the Bloch-Messiah or Euler decomposition~\cite{braunsteinSqz} we can write $S = R_2KR_1$ with $R_j$ symplectic orthogonal matrices and $K = \mathrm{diag}\left\{ k_1,\ldots,k_n,1/k_1,\ldots,1/k_n \right\}$ with $k_l \in (0,\infty)$ denoting an effective squeezing operation.  Similar to
ref. ~\cite{oscillator_to_oscillator},  we denote by $\mathtt{sq}\lr{S}=\sqrt{\mathrm{EV}_{max}\lr{S^TS}}$ the root of largest eigenvalue of $S^TS$,  or equivalently,  the largest squeezing factor in the  Bloch-Messiah decomposition of the associated symplectic matrix.  Finally,  we obtain
\begin{equation}
\Delta\leq \|S\bs{\eta}_{\square,j}\| \leq \|K\|  \|\bs{\eta}_{\square,j}\| = \frac{1}{\sqrt{2}}\mathtt{sq}\lr{S}.
\end{equation} 
The bound presented above generalizes straightforwardly to codes with higher logical dimension,  such that we obtain the following bound. 

\begin{them}[Squeezing bound to the distance]\label{thm:distance_squeezing}
Let $\CL=\CL\lr{M}$ specify a GKP code with symplectic Gram matrix $A=J_2 \otimes D$ in its canonical form.  Further let $S$ denote the symplectic matrix that transforms between $M$ and the generator $N_{\square}$ as specified in 
Corollary \ref{thm:symplectic_equiv_cor}.  We have
\begin{equation}
\Delta \leq \sqrt{\max_j D_{j,j}}^{-1}\mathtt{sq}\lr{S}.
\end{equation}
\end{them}

This bound is interesting since it,  similar to the bound presented in ref.~\cite{oscillator_to_oscillator},  bounds a benchmark for error correcting capabilities of a code by the squeezing measure of the Gaussian unitary necessary to prepare the code state.  In ref.~\cite{oscillator_to_oscillator} that bound implies a no-go theorem on the existence of thresholds for oscillator into (many-) oscillator codes by the assumption that the squeezing measure $\mathtt{sq}\lr{S}$ is naturally bounded.  Here,  it is already known that code families with increasing distance exist,  e.g., given by concatenation of single mode GKP codes with the surface code which have distance scaling as $\Delta \propto n^{1/4}$,  where $n$ is the number of modes.  The crucial difference between our scenarios is,  however,  that ref.~\cite{oscillator_to_oscillator} focuses on the channel coding setup such that each code state necessarily is prepared by a Gaussian unitary encoding channel,  while in the general quantum error correction setting such as considered here,  code states can be formed by measurements of stabilizers (possibly  of a LDPC code) and correction via application of displacement operators,  forgoing the absolute constraint posed by the squeezing measure.  Similar to the argument in ref. \cite{oscillator_to_oscillator},  Theorem \ref{thm:distance_squeezing},  however,  does restrict the possible effective protection a code state prepared via a Gaussian unitary acting on code states obtained from the associated normal form as in Corollary  \ref{thm:symplectic_equiv_cor}.  Finally,  it is worth mentioning that,  by the Bloch-Messiah decomposition,  a non-trivial squeezing content  $\mathtt{sq}\lr{S}$ as specified in Theorem \ref{thm:distance_squeezing} is a necessary ingredient to obtain codes with non-orthogonal lattices and,  effectively,  non-trivial distance scaling.

\subsection{Computing the distance}
In general,  computing the distance of a GKP code given only the lattice $\CL$ or a corresponding basis $M$ is  computationally hard.  Given a non-trivial logical displacement $\bs{\xi}^{\perp}\in \CL^{\perp}$,  a candidate for the distance is computed by fully reducing $\bs{\xi}^{\perp}$ with respect to the lattice $\CL$,  that is one needs to find the length of the shortest \textit{representative} of $\bs{\xi}^{\perp}$ modulo the lattice $\CL$.  Equivalently,  it is necessary to solve the \textit{optimization-closest vector problem (optCVP)} defined as%
\begin{equation}
\optCVP\lr{\bs{x},  \CL}:= \min_{\bs{\xi}\in \CL} \|\bs{x}-\bs{\xi}\|=\dist\lr{\bs{x},\CL}. \label{eq:def_optCVP}
\end{equation} 
A related problem is the \textit{(search) closest vector problem (CVP)} that aims at identifying the closest lattice vector to the input
\begin{equation}
\CVP\lr{\bs{x},  \CL}:= \argmin_{\bs{\xi}\in \CL} \|\bs{x}- \bs{\xi}\| \label{eq:def_CVP},
\end{equation} %
which given inputs $\bs{\xi}^{\perp}$ and $\CL$,  outputs the shortest representative of $\bs{\xi}^{\perp}$.  These problems are known to be NP-hard.   
\je{Interestingly, this is apparently known to be NP-hard for 
randomized reductions only, and only the 
 problem with respect to the uniform norm is known to be NP-hard.}
Naively, one may be tempted to try to identify the shortest logical vectors by going through the rows of $M^{\perp}_i=\bs{\xi}^{\perp\, T}_i$ and rounding it element-wise to the closest element in $\CL$,  i.e.%
\begin{equation}
\lr{\bs{\xi}^\perp_i/M}^T=\bs{\xi}^{\perp\, T}_{i}-\nint{\bs{\xi}^{\perp \, T}_{i} M^{-1}}M, 
\end{equation}%
where $\nint{x}$ denotes the closest integer to $x$,  \footnote{acting element-wise} in other words identifying the shortest logical representatives with non-zero rows of %
\begin{equation}
M^{\perp}/M=M^{\perp}-\nint{ M^{-T}J^{-1}M^{-1} }M=M^{\perp}-\nint{ A^{-1} }M=\left(I-\nint{A^{-1}}A\right)M^{\perp}.\label{eq:Mperp_mod_M2}
\end{equation}
Indeed,  we have found numerically for the concatenated GKP code built on Surface-17 with $M$ given in its HNF that this rounding method outputs the shortest logical representatives.  This is, however, a lucky coincidence.  It is known that the rounding method for solving CVP \cite{babai} perform poorly in general and only returns an accurate solution if the lattice basis is sufficiently well-behaved,  e.g.  when the rows of $M$ are orthogonal.  This is consistent with the fact that in general the solution of CVP is also hard to \textit{approximate} better than up to an exponential factor in the dimension.  In addition,  it is not \textit{a priori} granted that the basis for the symplectic dual $M^{\perp}$ consists of those vectors  that correspond to the shortest logical displacements only up to stabilizers.  Hence while this method allows to upper bound the distance $\Delta$ of a general GKP code by%
\begin{equation}
\Delta\leq\tilde{\Delta}:=\min_i \|\lr{M^{\perp}/M}_i\|,
\end{equation}
the r.h.s.  is not guaranteed to yield an accurate estimate of the code distance.
\je{Yes, this is true.  But even before coming to the theta functions, convex relaxations should
perform better than the above and they are rigorous.}\jc{I have no result on this not a concrete idea how we can find a bound/approximation using convex relation,  what do you have in mind?}
Here, we propose a different strategy to compute the distance that is feasible in particular for scaled GKP codes and concatenated codes.  
We introduce the \textit{theta function} of a lattice \cite{ConwaySloane}%
\begin{equation}
\Theta_{\CL}(z)=\sum_{\bs{x}\in \CL} q^{\bs{x}^T\bs{x}}=\sum_{\delta \in \CD } N_{\delta} q^{\delta}, \label{eq:def_theta}
\end{equation}%
where $q=e^{i\pi z}$.  $\CD=\lrc{\|\bs{x}\|^2_2,  \,  \bs{x}\in \CL}$ is the set of squared distances of $\CL$.  We have also introduced the number of lattice vectors of a given length %
\begin{equation}
N_{\delta}=\#\lrc{\bs{x}\in \CL:\,  \bs{x}^T\bs{x}=\delta}.\label{eq:def_N}
\end{equation}%
For integral lattices, i.e.,  when the corresponding (euclidean) Gram matrix satisfies $G=MM^T \in \mathbb{Z}^{2n\times 2n}$, we can set $\CD=\mathbb{N}$ and  $N_m$ is given by the number of integer solutions $\bs{n}\in \mathbb{Z}^{2n}$ to the equation $\bs{n}^T G \bs{n}=m$,  which is an example of a \textit{Diophantine equation}.  We call the pair $(\CD,  N_{\delta})$ the \textit{distance distribution} of the lattice $\CL$.
The theta function converges and is holomorphic for $\Im(z)>0$.  The first summands are given by %
\begin{equation}
\Theta_{\CL}(z)=1+\tau q^{\lambda_1}+\hdots,
\end{equation}%
where $\lambda_1=\lambda_1\lr{\CL}$ is the length of the shortest vector of the lattice and $\tau$ is known as the \textit{kissing number},  the number of minimal-length vectors of $\CL$.
It is known that the theta function for the (euclidean) dual of a lattice $\CL\subset \mathbb{R}^{2n}$ is given by%
\begin{equation}
\Theta_{\CL^*}(z)=\det\lr{\CL}\lr{\frac{i}{z}}^{n}\Theta_{\CL}\lr{-\frac{1}{z}}, \label{eq:theta_functional}
\end{equation}%
which follows from the Poisson summation formula (see also ref.~\cite{theta_course}).  In particular,  since we have seen that $\CL^*$ and $\CL^{\perp}$ only differ by an orthogonal transformation, which does not change the length of lattice vectors, and hence the theta function,  we also have 
\begin{equation}
\Theta_{\CL^{\perp}}(z)=\Theta_{\CL^*}(z),
\end{equation}
such that we can immediately obtain the theta function of  $\CL^{\perp}$ whenever the the theta function of the direct lattice $\Theta_{\CL}(z)$ is known.  
By definition of the theta function,  the distance of a GKP code specified by $\CL$ is given by the smallest non-zero power of $q$ in%
\begin{equation}
Q_{\CL}(z):=\Theta_{\CL^{\perp}}(z)-\Theta_{\CL}(z)=N_{\Delta^2}q^{\Delta^2}+\hdots,  \label{eq:distance_theta}
\end{equation}%
and furthermore since we have that $\Theta_{\CL}(z)$ is uniquely determined by the distance distribution $(\CD,  N_{\delta})$ of $\CL$,  which by eq.~\eqref{eq:theta_functional} also fully specifies $\Theta_{\CL^{\perp}}(z)$.  Note that  $Q_{\CL}(z)>0$ since $\CL \subseteq \CL^{\perp}$.  
We arrive at the following insight.
\begin{them}[Code distance is specified by the distance distribution]\label{thm:distance_dist}
The distance of a GKP code specified by $\CL$ is uniquely determined by its distance distribution $(\CD,  N_{\delta})$.
\end{them}%
The theta function of a scaled lattice is %
\begin{equation}
\Theta_{\sqrt{\lambda}\CL_0}(z)=\Theta_{\CL_0}(\lambda z).  
\end{equation}%
Many expressions of theta functions of symplectically self dual lattices,  in particular those that also correspond to euclidean self-dual lattices are known in the literature \cite{ConwaySloane} such that their corresponding distances can e.g.  be estimated from a logarithmic fit for small $q\ll 1$ of eq.~\eqref{eq:distance_theta},  or by expressing $Q$ in a basis for which the distance distribution is known. 
We have verified that this method works numerically for some small examples including single mode scaled codes and the concatenated GKP-repetition code using the \textit{weight-distribution} as outlined below. 

\fa{I think it would be nice to have a couple of examples where you show you can correctly estimate the distance from the fit.}

For concatenated (square) GKP codes,  where $\CL=\Lambda\lr{Q}$ is given by a Construction A  lattice,  we can express the theta function using the \textit{weight enumerator}.  We first introduce the \textit{weight distribution} of a linear code $Q$,  which is given by the numbers $\lrc{A_i}_{i=0}^{2n}$ of codewords $q$ in $Q \subset \mathbb{Z}_2^{2n}$ with Hamming weight $\wt(q) = i$.  Crucially,  the weight distribution $\lrc{A_i}_{i=0}^{2n}$ here refers to the Hamming-weight distribution of the \textit{symplectic representation} of the qubit stabilizers.
The weight enumerator is given by
\begin{equation}
W_Q\lr{x, y}=\sum_{q\in Q} x^{2n-\wt(q)}y^{\wt(q)} = \sum_{i=0}^{2n}A_i x^{2n-i}y^i.  
\end{equation}
Using this definition,  we can express the theta function of a construction A lattice \cite{ConwaySloane} by straightforward computation%
\begin{equation}
\Theta_{\Lambda(Q)}=W_Q\lr{\theta_3(2z), \theta_2(2z)},
\end{equation}%
where%
\begin{align}
\theta_3(z)&=\Theta_{\mathbb{Z}}(z)=\sum_{m\in \mathbb{Z}}q^{m^2}, \\
\theta_2(z)&=\Theta_{\mathbb{Z}+\frac{1}{2}}(z)=\sum_{m\in \mathbb{Z}}q^{\lr{m+\frac{1}{2}}^2}. 
\end{align}%
Since we see that we can find the theta function of a Construction A lattice corresponding to a concatenated code by means of the (Hamming) weight distribution of its (qubit) stabilizer group,  we have similar to Theorem \ref{thm:distance_dist} and by eq.~\eqref{eq:conc_dist},

\begin{cor}[Distance of CSS qubit stabilizer code from weight distribution \je{Ok?}]\label{cor:qubit_distance_weights}
The distance $d$ of a CSS qubit stabilizer code is fully determined by the weight distribution $\lrc{A_i}_{i=0}^{2n}$ of its stabilizers.
\end{cor}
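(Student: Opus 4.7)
The plan is to reduce the statement to the GKP-lattice distance framework developed in the preceding subsection and leverage Theorem~\ref{thm:distance_dist}. The starting observation is that every CSS qubit stabilizer code $Q$ embeds, via Construction A and concatenation with the single-mode square GKP code, into a GKP code whose stabilizer lattice is $\Lambda(Q)$. Because the code is CSS, the inequality \eqref{eq:conc_dist} holds with equality, so the qubit distance $d$ and the Euclidean GKP distance $\Delta$ of $\Lambda(Q)$ are related by $\Delta = \sqrt{d}\,\Delta_{\mathrm{loc}} = \sqrt{d/2}$. Hence it suffices to prove that $\Delta$ is determined by $\lrc{A_i}_{i=0}^{2n}$.

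The key step is the identity $\Theta_{\Lambda(Q)}(z) = W_Q\!\lr{\theta_3(2z),\theta_2(2z)}$, which manifestly shows that the theta function of $\Lambda(Q)$ is a polynomial in the universal series $\theta_2(2z),\theta_3(2z)$ whose coefficients are precisely the $A_i$. Applying the functional equation \eqref{eq:theta_functional}, the theta function $\Theta_{\Lambda(Q)^*}$ of the Euclidean dual (which coincides with $\Theta_{\Lambda(Q)^\perp}$ up to the orthogonal rotation by $J$, hence has identical distance distribution) is also uniquely determined by $\lrc{A_i}$. Therefore the series
\begin{equation}
Q_{\Lambda(Q)}(z) = \Theta_{\Lambda(Q)^\perp}(z) - \Theta_{\Lambda(Q)}(z)
\end{equation}
depends only on $\lrc{A_i}$, and by Theorem~\ref{thm:distance_dist} and eq.~\eqref{eq:distance_theta} its lowest non-vanishing exponent equals $\Delta^2$. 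Combining with $d = 2\Delta^2$ gives the corollary.

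The main obstacle—and the point that deserves care—is verifying that the chain of determinations is genuine. Specifically, one should check (i) that the embedding $Q \mapsto \Lambda(Q)$ is indeed faithful in the sense that $\Lambda(Q)$ inherits the CSS split and the equality case of \eqref{eq:conc_dist} applies cleanly to both $\Delta_q$ and $\Delta_p$, so that $\Delta = \min(\Delta_q,\Delta_p)$ corresponds to $d = \min(d_X,d_Z)$; and (ii) that the expansion of $W_Q(\theta_3(2z),\theta_2(2z))$ and its dual under \eqref{eq:theta_functional} produce a series whose minimal non-trivial exponent is uniquely extractable from the coefficients $A_i$ alone, with no hidden dependence on finer data of $Q$. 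Both points follow from the explicit Construction A definition of $\Lambda(Q)$ and the fact that the coordinate-wise Hamming weight of a codeword $q \in Q$ fixes the Euclidean length of the corresponding coset representative in $\Lambda(Q)/\lr{2\mathbb{Z}}^{2n}$.
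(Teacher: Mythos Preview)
Your proposal is correct and follows essentially the same route as the paper: embed the CSS qubit code into a Construction~A lattice, use the identity $\Theta_{\Lambda(Q)}=W_Q(\theta_3(2z),\theta_2(2z))$ to see that the theta function (and hence the distance distribution) is fixed by the weight distribution, invoke Theorem~\ref{thm:distance_dist} to conclude that the GKP distance $\Delta$ is determined, and then recover the qubit distance via the CSS equality case of eq.~\eqref{eq:conc_dist}. The paper states this argument in a single sentence preceding the corollary, while you spell out the intermediate steps (the functional equation for the dual theta and the extraction of $\Delta^2$ from $Q_{\Lambda(Q)}$) more explicitly, but the logic is the same.
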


We note that this corollary also follows from a quantum version of the weight enumerators defined by Shor and Laflamme \cite{ShorLaflamme} and Rains \cite{Rains},  which through their immediate relationship to the Quantum error correction conditions \cite{Knill_2000} impose strong restrictions on possible quantum error correcting codes. 
Generally,  when $\CL$ is integral (rational),   $\Theta_{\CL}\lr{z}$ is a modular form\footnote{if the lattice can be scaled to be integral,  we consider the scaled version before doing the analysis and scale back accordingly afterwards.},  i.e.,  it remains invariant under certain M\"obius transformations 
\begin{equation}
z\mapsto \frac{az+b}{cz+d} 
\end{equation}
and as a result can be expressed in a finite basis depending on the specific symmetry group of the theta function \cite{modularforms_approach, theta_course, ConwaySloane}.  This avenue is interesting for the explicit computation of distances but extends beyond the scope of the present article.  We thus defer further analysis of a modular forms approach to finding distances to future work.%

\section{Maximum likelihood decoding GKP codes}\label{sec:MLD}

In this section,  we discuss active quantum error correction and \emph{maximum likelihood decoding (MLD)} for general GKP codes using the tools discussed in this work.  The introduction to MLD follows closely the discussion in ref.~\cite{toricGKP}.
We assume a stochastic Gaussian displacement noise channel as specified in eq.~\eqref{eq:noise_channel} with variance $ \tilde{\sigma}^2$. \jc{already specified that this is Gaussian} For comparison with the literature,  when the displacement operators are defined by a more ``standard" convention without the overall constant $\sqrt{2\pi}$,  this corresponds to a physical variance of $\sigma^2=2\pi  \tilde{\sigma}^2$. %
Upon sampling an error $\bs{e}$ and measuring the stabilizers,  a syndrome vector of the form%
\begin{equation}
\bs{s}(\bs{e})=MJ\bs{e} \mod 1
\end{equation}%
is obtained as the phases of the eigenvalues of the stabilizer generators $\lrc{D\lr{M_i^T}}_{i=1}^{2n}$ when acting on a code state vector $\ket{\psi}$ displaced by an error vector $\bs{e}$,
\begin{equation}
D\lr{M_i^T}D\lr{\bs{e}}\ket{\psi}=e^{i2\pi M_i J\bs{e}} D\lr{\bs{e}}\ket{\psi},
\end{equation}
where $M_i=\bs{\xi}_i^T$ is the i'th row of $M$.

Since we are dealing with full rank lattices,  given the syndrome,  we can assign a pure error %
\begin{equation}
\bs{\eta}(\bs{s})=(MJ)^{-1}\bs{s} \label{eq:eta}
\end{equation} %
that has the same syndrome $\bs{s}$ as $\bs{e}$ as ``initial guess" for the correction.  
The net displacement of amplitude $\bs{e}-\bs{\eta}$ is necessarily in the dual lattice, and hence maps the code space to itself, but it may be logically non-trivial.  It might thus be necessary to modify the initial guess by a logical vector (leaving the syndrome invariant), such that $\bs{e}-\bs{\eta}$ is a stabilizer.  Note that any $\bs{\eta} + \bs{l}$ with $\bs{l}\in \CL^\perp$ will lead to the same syndrome, so the choice of the additional logical displacement cannot be inferred from the syndrome alone.  Rather,  it is necessary to take the noise model into account. 

To find the appropriate logical post-correction, for every $\bs{\xi}^\perp \in \CL^\perp/\CL$ we evaluate the probabilities that,  given syndrome $\bs{s}$,  the actual error is stabilizer equivalent to $\bs{\eta}(\bs{s})+\bs{\xi}^{\perp}$, which is given by
\begin{equation}
P([\bs{\eta}(\bs{s})+\bs{\xi}^{\perp}] | \bs{s}) = P^{-1}(\bs{s}) \sum_{\bs{\xi} \in \mathcal{L}} P_{\tilde{\sigma}}(\bs{\eta}(\bs{s})+\bs{\xi}^{\perp}+\bs{\xi}), \label{MLD_1}
\end{equation} 
where $\lrq{\bs{x}} = \left\{ \bs{x} + \bs{\xi}, \bs{\xi}\in \CL\right\} $ and  $P_{\tilde{\sigma}}$ is as specified in eq.~\eqref{eq:noise_channel}. %
This can be rewritten as%
\begin{align}
P([\bs{\eta}(\bs{s})+\bs{\xi}^{\perp}] | \bs{s}) &= P^{-1}(\bs{s}) \sum_{\bs{\xi} \in \mathcal{L}+\bs{\eta}(\bs{s})+\bs{\xi}^{\perp}} P_{\tilde{\sigma}}(\bs{\xi}) \\
&=\sqrt{2\pi\tilde{\sigma}^{2n}}^{-1}  P^{-1}(\bs{s})  \Theta_{\mathcal{L}+\bs{\eta}(\bs{s})+\bs{\xi}^{\perp}}\left(\frac{i}{2\pi \tilde{\sigma}^2}\right), \label{MLD_2}
\end{align}%
proportional to the theta series of the \textit{packing} $\CP=\CL+\bs{\eta}(\bs{s})+\bs{\xi}^{\perp}$ evaluated in $z = \frac{i}{2\pi \tilde{\sigma}^2}$ \footnote{ $\CP$ as the translate of a lattice $\CL$ is formally not a lattice,  in particular $\CP$ may not contain the origin.} .
Let $\lr{\tilde{\CD},\, N_{\tilde{\delta}}}$ denote the distance distribution of $\CP$.
We can write the coset probabilities above in a \textit{small error} or \textit{``low temperature expansion"}. \footnote{Note that in contrast to the usual literature we do not distinguish a Nishimori line,  such that these are the same things.}%
\begin{equation}
 \Theta_{\mathcal{L}+\bs{\eta}(\bs{s})+\xi^{\perp}}\left(\frac{i}{2\pi \tilde{\sigma}^2}\right) = \sum_{\tilde{\delta} \in \tilde{\mathcal{D}}} N_{\tilde{\delta}} q^{\tilde{\delta}}. \label{MLD_3}
\end{equation}%
evaluated at $q = \exp\lr{ -1/2\tilde{\sigma}^2 }$.  Finally,  MLD decoding is implemented by applying the total correction %
\begin{equation}
\overline{\bs{\eta}}=\bs{\eta}(\bs{s}) + \argmax_{\bs{\xi}^{\perp} \in \mathcal{L}^{\perp}/\mathcal{L}} P([\bs{\eta}(\bs{s})+\bs{\xi}^{\perp}] | \bs{s}).
\end{equation} 
Methods to evaluate or sample the theta function of a lattice have been discussed in the literature \cite{modularforms_approach}.   Here, we find that,  given an algorithm that reliably estimates the theta function of a packing,  we can also approximately solve MLD decoding.


\paragraph{Minimum energy decoding (MED).}
In the limit $\tilde{\sigma},  q \rightarrow 0$ the sum \eqref{MLD_3} becomes sharply distributed around solutions with minimal $\tilde{\delta}$.  That is,  the bulk of the sum \eqref{MLD_3}  is determined by
\begin{equation}
\argmin_{\xi \in \mathcal{L}} \|\xi + \bs{\eta}(\bs{s})+\xi^{\perp}\|.
\end{equation}
The MLD correction then becomes
\begin{align}
\overline{\bs{\eta}}
&=\bs{\eta}(\bs{s}) +\argmin_{\xi^{\perp} \in \mathcal{L}^{\perp}/\mathcal{L}}  \;  \min_{\xi \in \mathcal{L}} \|\bs{\eta}(\bs{s})+\xi^{\perp}+\xi \| \\
&=\bs{\eta}(\bs{s}) + \argmin_{\xi^{\perp} \in \mathcal{L}^{\perp}}   \|\bs{\eta}(\bs{s})+\xi^{\perp}\| \\
&=\bs{\eta}(\bs{s}) - \argmin_{\xi^{\perp} \in \mathcal{L}^{\perp}}   \|\bs{\eta}(\bs{s})-\xi^{\perp}\|. \label{ME_1}
\end{align}%
The last line Eq~\eqref{ME_1} shows that this immediate version of MED requires solving CVP on the lattice $\CL^{\perp}$.  In the special case of a concatenated code,  $\CL=\Lambda\lr{Q}$,  we have that $\CL_\mathrm{GKP}:=\lrc{\bs{x}:\, \sqrt{2}\bs{x} \mod 2 =0}$ with generator matrix $M_\mathrm{GKP}=\sqrt{2}I_{2n}$ is a sublattice  of $\CL$ and its dual contains $\CL^{\perp}\subseteq \CL_\mathrm{GKP}^{\perp}$. 
In the literature,  a common approach to solve MLD for concatenated GKP codes is to apply the above outlined CVP decoder to the superlattice $\CL_\mathrm{GKP}^{\perp}$,  which leaves a net qubit level syndrome to be decoded with a standard qubit decoder \cite{Haenggli_2020, surfGKP, toricGKP}.  Solving CVP on $\CL_\mathrm{GKP}^{\perp}$ is easy since it is an orthogonal lattice and one can straightforwardly apply the rounding trick outlined in the previous section. 
However,  as we see below in such a two-step approach the initial guess for the correction on the sublattice $\CL_\mathrm{GKP}$ is arbitrary.

\paragraph{MLD decoding concatenated codes.} 
For concatenated (square) GKP codes,  we dissect the initial correction in two components %
\begin{equation}
\bs{\eta}(\bs{s})=\bs{\eta}^\mathrm{GKP}+\bs{\eta}^Q,
\end{equation}  such that %
\begin{equation}
\bs{\eta}^\mathrm{GKP}=(M_{GKP}J)^{-1}\bs{s}_{\mathrm{GKP}}
\end{equation}%
reproduces the syndrome detected by the single-mode GKP stabilizers $\CL_\mathrm{GKP}$ and $\bs{\eta}^Q \in \CL_\mathrm{GKP}^{\perp}=
\mathbb{Z}^{2n}/\sqrt{2}$ is such that the composite initial guess is consistent with the overall syndrome.  For notational clarity we suppress the explicit dependency on the syndrome $\bs{s}$.  %
The relevant theta function evaluates to%
\begin{equation}
 \Theta_{\mathcal{L}+\bs{\eta}(\bs{s})+\bs{\xi}^{\perp}}\lr{z}=\sum_{\bs{x} \in Q} \prod_{i=1}^{2n} \sum_{v_i \in \mathbb{Z}} q^{\lrq{\lr{\bs{x}+\sqrt{2}\bs{\eta}^\mathrm{GKP}+\sqrt{2}\bs{\eta}^{Q}+\sqrt{2}\bs{\xi}^{\perp}}_i+2v_i}^2},\label{MLD_4}
\end{equation}%
where we see that we can w.l.o.g.  reduce $\bs{\eta}^Q,  \bs{\xi}^{\perp} \in \frac{1}{\sqrt{2}}\mathbb{Z}_2^{2n} $ and consider the sum in the exponent $\bs{x}+\sqrt{2}\bs{\eta}^{Q}+\sqrt{2}\bs{\xi}^{\perp}$ to be carried out modulo $2$. \footnote{Any other choice just permutes the sum.} 
We interpret the individual terms
\begin{equation}
f\lr{b_i | \eta^\mathrm{GKP}_i}\equiv \sum_{v_i \in \mathbb{Z}} q^{\lr{\sqrt{2}\eta^\mathrm{GKP}_i+b_i+2v_i}^2}=e^{-\frac{1}{2\tilde{\sigma}^2}\lr{\sqrt{2}\eta^\mathrm{GKP}_i+b_i}^2}\theta_3\lr{\frac{i}{\tilde{\sigma}^2}\lr{\sqrt{2}\eta^\mathrm{GKP}_i+b_i} \Big\vert \frac{2i}{\pi \tilde{\sigma}^2} }
\end{equation}%
as proportional to an indicator for the likelihood of a local bit value $b_i=x_i+\sqrt{2}\eta_i^{Q}+\sqrt{2}\xi_i^{\perp} \in \mathbb{Z}_2^{2n}$
where we have used the definition%
\begin{equation}
\theta_3\left(\xi | z \right)=\sum_{m\in \mathbb{Z}} e^{2mi\xi+\pi i z m^2}, \,\ \Im(z)>0.
\end{equation}%
Overall,  eq.~\eqref{MLD_4} can then be rewritten as%
\begin{equation}
 \Theta_{\mathcal{L}+\bs{\eta}(\bs{s})+\xi^{\perp}}\left(\frac{i}{2\pi \tilde{\sigma}^2}\right)=\sum_{\bs{x}\in Q} \prod_{i=1}^{2n} f\lr{b_i | \eta^{GKP}_i},\label{MLD_TN}
\end{equation}%

reproducing a form for the MLD probability similar to those provided in ref.  \cite{Haenggli_2020} for the surface-GKP code.  It is clear that this derivation can be carried out similarly for the concatenation with other single-mode GKP codes or more general codes,  such as those derived from glued lattices discussed in the next section.  The MLD probability in eq.  \eqref{MLD_TN} can be expressed using a local tensor network whenever the underlying qubit code is LDPC as has been demonstrated in refs.  \cite{Bravyi_2014,  Haenggli_2020}.  Although qualitatively similar,  the effective local error probabilities in this formulation are not exactly the same as in refs.~\cite{surfGKP,  Haenggli_2020} since normalization constants $\sum_{b_i \in \mathbb{Z}_2} f\lr{b_i | \eta^\mathrm{GKP}_i} $ cannot be immediately pulled in from the global normalization $P(s)$ below each $f\lr{b_i | \eta^{GKP}_i}$ term.  
Besides the generality of this formulation of the MLD for GKP codes,  it would be interesting to test whether it also provides a quantitative advantage in decoding.  We examine this question more in-depth in a forthcoming publication. 



\section{Other constructions of GKP codes \label{sec:constructions}}
In this section we present two constructions of GKP codes that go beyond the scaled and concatenated GKP codes that are based on lattice glueing and the lattice tensor product.  
\je{This is the possibly most interesting direction, in the long term, we should maybe stress this earlier.}
We expect that such alternative constructions allow for more flexibility in designing codes in the future,  in particular when noise varies in strength for different modes or when no physical locality restrictions are present such as in photonic architectures.  The following constructions are adapted from the discussion of glued (euclidean) lattices and the lattice tensor product in refs.~\cite{ConwaySloane,  gannon_thesis}.

\subsection{Glued codes beyond concatenation}
We can understand concatenated GKP codes as GKP codes built on certain types of \textit{glued lattices},  which include those obtained through Construction A.
\je{Capitalized.}
 Here, we give a brief overview of glueing theory for symplectic lattices, leaning on the description in ref.~\cite{ConwaySloane}.  We begin by dissecting a (glued) symplectic lattice with symplectic sublattice in a top-down approach to understand its structure and then move to a bottom-up approach to construct a glued lattice from a base lattice by appending an appropriate glue group.  
Let us assume that we have a $2n-$dimensional symplectic lattice $\mathcal{L}$ that has a (symplectic) sublattice $\CL_0$ with direct sum structure 
\begin{equation}
\CL_0=\bigoplus_{i=1}^k \mathcal{L}_i.
\end{equation} 
Vectors  $\bs{v} \in \mathcal{L}$ can be written as 

\begin{equation}\bs{v} =\sum_i \bs{v} _i,\label{eq:vGlue}
\end{equation} 
where $\bs{v} _i \in  \mathbb{R} \otimes \mathcal{L}_i$.  The symplectic inner product of any $\bs{v} _i$ with any vector of $\CL_i$ is integer,  such that it can be concluded that $\bs{v} _i\in \mathcal{L}_i^{\perp}$.  Moreover we can add to any $\bs{v} _i$ a vector from $\mathcal{L}_i$ without changing the fact that $\bs{v}$  has integer symplectic inner product with any other vector of $\CL$.  It hence suffices to demand $\bs{v} _i \in \mathcal{L}_i^{\perp}/\mathcal{L}_i$.  Such vectors are called \textit{glue vectors} for $\mathcal{L}_i$,  which in the coding language correspond to logical representatives of a local code.  $\mathcal{L}_i^{\perp}/\mathcal{L}_i$ is also known as the (symplectic) dual quotient or \textit{glue group} for $\mathcal{L}_i$.  We can thus obtain symplectic lattices from a \textit{base lattice} $\CL_0=\bigoplus_{i=1}^k \mathcal{L}_i$ by adding  vectors $\bs{v}$ of the form in eq.~\eqref{eq:vGlue},  where each $\bs{v_i}\in \mathcal{L}_i^{\perp}/\mathcal{L}_i$.  We also refer to the set of extra vectors $\bs{v}$ as the \textit{glue group} $G$.
Let $\mathcal{L}= \bigoplus_{i=1}^k \mathcal{L}_i \cup G$ be a glued lattice of this form.  It is clear that we have $\mathcal{L}^{\perp} = \bigoplus_{i=1}^k \mathcal{L}_i^{\perp} \cap G^{\perp}$.

Generally,  whenever we have \textit{saturated} sublattices $\mathcal{L}_0\subset \mathcal{L}$,  i.e.,  $\mathbb{R}\otimes\mathcal{L}_0=\mathbb{R}\otimes\mathcal{L}$,  we have $|\mathcal{L}_0|/|\mathcal{L}|\in\mathbb{Z}$.  For $g\in \mathcal{L}$ we consider the glue classes $[g]=g+\mathcal{L}_0$,  which form an additive group that we denote 
\begin{equation}
G=\langle g_1,\dots, g_r\rangle.  
\end{equation}
Conversely,  in a bottom-up approach,  a glued lattice $\mathcal{L}$ can be constructed by considering a general glue group $G\subseteq \CL_0^{\perp}/\CL_0$ and forming $\mathcal{L}=\mathcal{L}_0[G]=\mathcal{L}_0 \cup G$.   $G$ is cyclic and isomorphic to $\mathbb{Z}_{n_1}\times ..\times \mathbb{Z}_{n_r}$ ,  where $\mathbb{Z}_n$ is the cyclic group of order $n$,  and each $n_i$ is the order of the corresponding generator $g_i$,  i.e  the smallest positive integer such that $n_i g_i \in \mathcal{L}_0$ .
The determinant of the glued lattice $\mathcal{L}_0[G]$ can be computed as  \cite{gannon_thesis} %
\begin{equation}
|\mathcal{L}_0[G]|=|\mathcal{L}_0|/|G|=|\mathcal{L}_0|/(\prod_{i=1}^r n_i). \label{det_glued}
\end{equation}%
To construct a \textit{symplectic} glued lattice from a symplectic base lattice $\mathcal{L}_0$,  it is important to take care that every $g_i$ has integer symplectic inner product with every other $g_j$ -- i.e., $G$ is itself a finite symplectic group and that each $G$ has integer symplectic inner product with each $x\in\mathcal{L}_0$,  i.e.,  $G\subset \mathcal{L}_0^{\perp}$.  It is easy to see that the earlier considerations are reproduced for $\mathcal{L}_0=\bigoplus_{i=1}^k \mathcal{L}_i$. 
Using \eqref{det_glued} we can obtain the logical  dimension of GKP codes associated to glued lattices $\mathcal{L}_0[G]$.  E.g.  for a concatenated GKP-qubit code,  $G\subset\mathcal{L}_0^{\perp}/\mathcal{L}_0$ is identified with the outer code with,  say,  $r$ lineary independent generators,  each with order $n_i=2$ in $\mathcal{L}_0$.  $\mathcal{L}_0=\mathcal{L}(\sqrt{2}I_{2n})$,  such that we compute $|\mathcal{L}_0[G]|=2^{n-r}=2^{k}$,  consistent with what we would expect.
Similar to the derivation for concatenated codes,  we do not expect it to be hard to derive the distance of a glued lattice $\mathcal{L}_0[G]$.  However, this quantity strongly depends on the specifications of the glue group $\CL_0^{\perp}/\CL_0$ and $G$ which makes it difficult to write down a general solution in closed form.
\je{Is it obvious to see when a glued lattice construction gives rise to codes that are not concatenated
codes?}\jc{e.g. consider e.g.  a qubit surface code where some of the qubits are replaced by higher level qudits - logical strings that go through these qudits now have different ways to pass through them (i.e.  we can choose e.g.  $X$ or $X^2$ to be the local tensor product factor).  These different ``pathes" have different lengths.}

\subsection{Tensor product codes}

Aside from the glueing construction,  it is also possible to obtain new codes by taking outer products of lattices.  The idea behind this construction is akin to product constructions known for qubit  quantum error correcting codes,  namely the  \emph{hypergraph product codes} by Tillich and Zemor \cite{Tillich_2014}, where the defining structure of the code is a hypergraph, and Homological product codes by Bravyi and Hastings \cite{HomologicalProduct}, where the code is defined via a cell complex.  
For GKP codes,  the defining structure of the codes is given by a lattices,  such that the tensor product for lattices serves as an immediate candidate for a similar construction.  

Let $\mathcal{L}_1=\mathcal{L}(M_1)\subset \mathbb{R}^2$ be a symplectic lattice with  symplectic Gram matrix $A_1=M_1J_2M_1^T$ and  $\mathcal{L}_2=\mathcal{L}(M_2)\subset \mathbb{R}^{n}$ an integral lattice with euclidean Gram matrix $G_2=M_2M_2^T$.  The tensor product lattice is defined as $\mathcal{L}_{\otimes}=\mathcal{L}(M_1\otimes M_2)=\mathcal{L}_1\otimes \mathcal{L}_2 \subset \mathbb{R}^{2n}$,  i.e.,  a basis for $\mathcal{L}_{\otimes}$ is given by $\lrc{(M_1)_i\otimes  (M_2)_j, \; i=1,2,\; j=1,\hdots,n}$. $\mathcal{L}_{\otimes}$ is a symplectic lattice due to the decomposition $J_{2n}=J_2\otimes I_n$, and its symplectic Gram matrix reads%
\begin{equation}
A_{\otimes}=(M_1\otimes M_2)J_{2n} (M_1\otimes M_2)^T=A_1 \otimes G_2,
\end{equation}%
which is integral by construction.  The canonical dual basis is given by
\begin{equation}
M^{\perp}_{\otimes}=(J_{2n}M_{\otimes}^T)^{-1}=M_{\otimes}^{-T}J_{2n}^T=M_1^{-T}J_2^T\otimes M_2^{-T}=M_1^{\perp}\otimes M_2^{*},
\end{equation}
which forms a basis for the symplectically dual lattice $\mathcal{L}_{\otimes}^{\perp}=\mathcal{L}_{1}^{\perp}\otimes \mathcal{L}_{2}^{*}$.
We have 
\begin{align}%
|A_1 \otimes G_2|&=|A_1|^n|G_2|^2,\\
k_{\otimes}=\frac{1}{2}\log_2 |A_1 \otimes G_2| &= \frac{n}{2}\log_2 |A_1| + \log_2 |G_2|.
\end{align}%
\begin{them}[Distance of tensor product codes]
The distance of the tensor product code $$\Delta_{\otimes}=\min_{0\neq x\in \mathcal{L}_{\otimes}^{\perp}/\mathcal{L}_{\otimes}} \|x\|$$ obeys %
\begin{equation}
\max\left\{ \frac{\Delta_1}{\lambda_n(\mathcal{L}_2)}, \frac{\Delta_2}{\lambda_2(\mathcal{L}_1)}\right\} \leq \Delta_{\otimes} \leq \Delta_1\Delta_2,
\end{equation}%
where %
\begin{equation}\Delta_1=\min_{0\neq x\in \mathcal{L}_1^{\perp}/\mathcal{L}_1} \|x\|,  \hspace{1cm} \Delta_2=\min_{0\neq x\in \mathcal{L}_2^{*}/\mathcal{L}_2} \|x\|.
\end{equation}%
\end{them} 

\begin{proof}
The proof is analogous to that of Lemma 2 in ref.~\cite{HomologicalProduct}.
To prove the upper bound,  let $x\in \mathcal{L}_{1}^{\perp}/\mathcal{L}_{1},\; y\in \mathcal{L}_{2}^{*}/\mathcal{L}_{2}$,  be minimal non-trivial  logical representatives of each component codes.  It is clear that $x \otimes y \in \mathcal{L}_{\otimes}^{\perp}$.  Further we have $x \otimes y \notin \mathcal{L}_{\otimes}$ because we can pick $a \in \mathcal{L}_1^{\perp},\; b \in \mathcal{L}_2^{*}$ such that $a^TJ_2x\in \mathbb{Q}_1,\, b^Ty\in \mathbb{Q}_1$,  where $\mathbb{Q}_1=\mathbb{Q}\cap (0,1)$,  yielding $(a\otimes b)^T J_{2n}(x\otimes y) \notin \mathbb{Z}$ (note that the symplectic inner product sets the commutation phase for the associated displacement operators).  As such,  we have  obtained a non-trivial logical operator $x\otimes y \in \mathcal{L}_{\otimes}^{\perp}/\mathcal{L}_{\otimes}$ and $\Delta_{\otimes}\leq \|x\otimes y\|=\Delta_1 \Delta_2$.
Let $0\neq\bs{\psi} \in \mathcal{L}_{\otimes}^{\perp}/\mathcal{L}_{\otimes}$ be a minimal length non-trivial vector.  We can always choose $0\neq c\in\mathcal{L}_1^{\perp}/\mathcal{L}_1$ and $d\in \mathcal{L}_2$ such that %
\begin{equation}(c\otimes d)^T (J_2\otimes I_n) \psi\notin\mathbb{Z}.\label{eq:proof_assumption_1}
\end{equation} %
Let  $\mathit{vec}^{-1}(\psi)$ be the un-vectorization of $\psi$,  i.e., if $\psi=\sum_i x_i \otimes y_i,  \, x_i\in \mathcal{L}_1^{\perp},\,y_i \in \mathcal{L}_2^{*}$  we have $\mathit{vec}^{-1}(\psi)=\sum_i x_i y_i^T$.  It holds that $\mathit{vec}^{-1}(\psi)d\in \mathcal{L}_1^{\perp}$ and $\mathit{vec}^{-1}(\psi)d \notin \mathcal{L}_1 $ since otherwise $(c\otimes d)^T (J_2\otimes I_n) \psi = c^TJ_2\mathit{vec}^{-1}(\psi) d \in\mathbb{Z}$ for all choices of $\bs{c},  \bs{d}$,  such that $\psi$ is logically trivial,  which is not the case by assumption.  Hence,  $\mathit{vec}^{-1}(\psi)d $ is a non-trivial representative of $\mathcal{L}_1^{\perp}/\mathcal{L}_1$.  Using Cauchy-Schwartz we have $\|\psi\|\|d\|=\|\mathit{vec}^{-1}(\psi)\|_F\|d\|\geq  \|\mathit{vec}^{-1}(\psi)d\|\geq \Delta_1$,  where $\|\cdot\|_F$ is the Frobenius norm.  
\je{Earlier, this work talks about the 2-norm. What is the Frobenius norm if not the
2-norm?}\jc{I'm just trying to consistently use $2-$norm for vector norms and Frobenius for something that takes a matrix as input.}
We can always choose $d\in\mathcal{L}_2$ with length at most the covering radius $\lambda_n(\mathcal{L}_2)$ that satisfies eq.~\eqref{eq:proof_assumption_1}.  This is because any $d\in\mathcal{L}_2$  can be written in a basis 
\begin{equation}
d=\sum_{i=1}^n a_i \xi_i, \hspace{.5cm} a_i\in \mathbb{Z},\hspace{.5cm}  \xi_i\in \CL_2 :\; \|\xi_i\|\leq \lambda_n\lr{\CL_2}.
\end{equation}
When $d$ satisfies eq.~\eqref{eq:proof_assumption_1},  we have
\begin{equation}
\mathbb{Z}\not\ni \sum_{i=1}^n (c\otimes d)^T (J_2\otimes I_n) \psi = \sum_{i=1}^n a_i (c\otimes \xi_i)^T (J_2\otimes I_n) \psi.
\end{equation}
There must be at least one summand $i=x$,  for which $a_x (c\otimes \xi_x)^T (J_2\otimes I_n) \psi \not\in \mathbb{Z}$.  Since $a_x \in \mathbb{Z}$,  it must hold that $(c\otimes \xi_x)^T (J_2\otimes I_n) \psi \not\in \mathbb{Z}$. 
Finally, we obtain 
\begin{equation}
\|\psi\|\geq \frac{\Delta_1}{\lambda_n(\mathcal{L}_2)}.
\end{equation}
Following the same procedure,  we choose $c\in \mathcal{L}_1,d\in \mathcal{L}_2^{\perp}/\mathcal{L}_2$ such that eq.~\eqref{eq:proof_assumption_1} is satisfied to show that $\|\psi\| \|c\|\geq \Delta_2$.
\je{The upper and lower bounds can be very different, but this is what it is, right? Without further
structure one cannot tighten them, I presume.}\jc{yes,  i dont think there is more to say...}
\end{proof}

\section{Conclusions and future work \label{sec:conclusions}}

In this work, we have shown how some of the core concepts of lattice theory can be fruitfully used to analyse the structure of GKP codes.  We have introduced the notion of a euclidean code distance for GKP codes and shown its relation with other relevant code parameters,  in particular the trade-off with the encoding rate and length of stabilizer operators.  We have examined the structure of particular GKP code constructions,  such as scaled and concatenated GKP codes and shown how extensions of those constructions can be obtained using tools already present in the lattice theory literature.   Using a lattice theoretic perspective when discussing bounds on GKP code parameters,  we observed that the distance of the code is intimately related to the length of the stabilizers.  Leveraging the natural embedding of qubit codes in $\mathbb{R}^{2n}$ provided by the concatenation with GKP codes we have shown in Corollary \ref{cor:qubit_distance_weights} how this result leads to a new proof of a similar property for qubit  stabilizer codes,  which is simpler than those found in the literature.  This highlights how concepts from lattice theory can also become valuable tools to understand qubit stabilizer codes.  Concerning the role of symplectic operations,  which are central both in the theory and experiments with bosonic systems, we have clarified which GKP codes are related through a symplectic transformation, extending previous results for the single-mode case.  This can also be interpreted as code-switching via Gaussian unitaries.  In particular, we have identified a necessary and sufficient condition in the equality of the symplectic Gram matrix in standard form.

We have discussed how the GKP codes highlighted in this work can be decoded in terms of a general maximum likelihood decoder for the GKP code,  for which the defining logical coset probability is given by the lattice theta function.  Algorithms to approximate or sample  theta functions have been discussed in the literature \cite{modularforms_approach},  which we identify as possible route forward to build general purpose decoders for GKP codes.  More details on the practical implementation of decoders are,  however,  to appear in a separate publication. 

The theta function of the GKP code lattice is also proven to encode information about the distance of GKP codes.  It is known that,  depending on the lattice,  the theta function may possess certain symmetries under conformal transformations,  which allow it to be represented in a finitely generated basis.  In other words,  if the theta function has certain symmetries (e.g.  invariance under $z\mapsto z+2$ for integral lattices) there exists a finite basis classified by the symmetry which can be used to write the theta function.  This can also potentially be useful for decoding.  In general,  it would be interesting to further investigate the connection between the code distance and possible conformal symmetries of the theta function of the lattice.  

It is worth mentioning that the theta function of glued lattices also appears in place of the partition function of certain string theories \cite{gannon_thesis}.  Given that it represents the MLD probability discussed in the Section \ref{sec:MLD},  which has also been interpreted in the past as  the partition function of a \textit{statistical mechanics model} \cite{Dennis_2002, toricGKP},  it is an interesting open question whether a rigorous connection between string theories and  the statistical mechanics model for decoding GKP codes can be leveraged to obtain a deeper understanding for either.

Finally,  we have discussed constructions of GKP codes beyond scaled- and concatenated codes.  We have seen that the toolbox of lattice theory allows for a remarkably versatile set of code constructions which have great potential for the search of good code families with potentially better parameters,  better performance tailored to particular physical implementations or allowing for better decoders relying on the structure of the lattice,  without reference to an intermediate qubit decoder.
Further generalizations of GKP code constructions,  such as subsystem variants by constructions on  \textit{shifted lattices} \cite{gannon_thesis} also seem possible.  Overall,  we hope that this work will stimulate further interest in examining stabilizer codes,  and in particular GKP codes,  through the lens of lattice theory and will provide a foundation to import useful tools for the study of bosonic error correction. 

\begin{acknowledgments}
J.~C.~thanks B.~Terhal for asking whether product constructions for GKP codes can be found and N.~P. ~Breuckmann,  A. ~Ciani and J.-P.~Seifert for helpful discussions.  F.~A.~acknowledges the support of the Alexander von Humboldt foundation.  This work has been funded by the BMBF (RealistiQ,  for which it provides a key physics 
non-agnostic approach to quantum error correction, PhoQuant, QPIC-1, and QSolid) and the DFG (CRC 183,  project B04, on entangled states of matter).  We also gratefully acknowledge support from the Einstein Research Unit on quantum devices.

\end{acknowledgments}

\appendix

\section{Code space of GKP codes}

In this section,  we show how the code space of a general GKP code can be represented in terms of the Wigner function of the code space projector
\begin{equation}
\Pi_{GKP}=\sum_{g\in \mathcal{S}} g.
\end{equation}
In particular,  this illustrates in how far the code space is non-trivially influenced by the choice of pivot basis for the GKP code and the phase-sector $\phi_M$.
Consistent with our non-standard definition of the displacement operator,  we define the generalized Wigner function of an operator $O$ \cite{Weedbrook_2012,  Continuous}
\begin{equation}
W_O\lr{\bs{x}} =\int_{\mathbb{R}^{2n}} d\bs{\eta}\, e^{-i2\pi \bs{x}^T J \bs{\eta}} \Tr\lrq{D\lr{\bs{\eta}} O},
\end{equation}
such that we also have
\begin{equation}
\int_{\mathbb{R}^{2n}} d\bs{x} W_O\lr{\bs{x}}=\Tr\lrq{O}.
\end{equation}
And we equip us with the following lemma.

\begin{lem}[Dirac comb representation]\label{lem:dirac_comb}
Let $\CL \subseteq \mathbb{R}^{2n}$ be a lattice with generator $M$ with symplectic dual $\CL^{\perp}$,  we have
\begin{equation}
\sum_{\bs{\xi} \in \CL} e^{i2\pi \bs{\xi}^TJ\bs{z}}= \frac{1}{\det\lr{\CL}}\sum_{\bs{\xi}^{\perp} \in \CL^{\perp}} \delta^{2n}\lr{\bs{z}-\bs{\xi}^{\perp}}.
\end{equation}
\end{lem}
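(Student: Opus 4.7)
The plan is to recognize the claim as a disguised version of the classical Poisson summation formula, applied to the Euclidean dual $\CL^*$ of $\CL$, and then transported to the symplectic dual via the orthogonal map $J$, using the identification $\CL^\perp = J\CL^*$ that already follows from the definitions in the Preliminaries (since $\bs{y}^T J \bs{x}\in\mathbb{Z}$ for all $\bs{x}\in\CL$ is equivalent to $J^T\bs{y}\in\CL^*$).

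Concretely, I would first invoke the distributional Poisson summation identity for the full-rank lattice $\CL$ with basis $M$ and Euclidean dual $\CL^*$,
\begin{equation}
\sum_{\bs{\xi}\in\CL} e^{-i2\pi \bs{\xi}^T \bs{\eta}} = \frac{1}{|\det M|}\sum_{\bs{\xi}^*\in\CL^*}\delta^{(2n)}\lr{\bs{\eta}-\bs{\xi}^*},
\end{equation}
understood as an equality of tempered distributions in $\bs{\eta}\in\mathbb{R}^{2n}$; this is standard and follows from Fourier-inverting the Dirac comb supported on $\CL$. Substituting $\bs{\eta}=-J\bs{z}$ turns the LHS into the desired phase sum $\sum_{\bs{\xi}\in\CL} e^{i2\pi \bs{\xi}^T J \bs{z}}$ and each delta on the RHS into $\delta^{(2n)}\lr{-J\bs{z}-\bs{\xi}^*}$. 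Using evenness of $\delta^{(2n)}$ together with $|\det J|=1$, this reduces to $\delta^{(2n)}\lr{\bs{z} + J^{-1}\bs{\xi}^*}$.

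Finally, I would reindex the sum via $\bs{\xi}^\perp := -J^{-1}\bs{\xi}^* = J\bs{\xi}^*$, which is a bijection from $\CL^*$ onto $J\CL^* = \CL^\perp$, and combine with $\det\lr{\CL}=|\det M|$ to read off exactly the claimed identity. The main point to be careful about is that both sides are tempered distributions rather than ordinary functions, so the substitution $\bs{\eta}\mapsto -J\bs{z}$ must be interpreted as a linear pullback acting on Schwartz test functions; because $J$ is a smooth invertible change of coordinates with trivial Jacobian, no convergence issues arise beyond those already handled by the distributional form of Poisson summation. Beyond this interpretive step the argument is purely algebraic, which makes me expect no serious obstacle.
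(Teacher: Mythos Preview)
Your argument is correct. Both proofs ultimately reduce to Poisson summation, but the routes differ in how directly they get there.

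The paper does not quote the multidimensional distributional Poisson formula as a black box. Instead it parametrizes $\CL$ via the generator, $\bs{\xi}^T=\bs{a}^TM$ with $\bs{a}\in\mathbb{Z}^{2n}$, so that the sum factorizes as $\prod_j\sum_{a_j\in\mathbb{Z}} e^{i2\pi a_j (MJ\bs{z})_j}$, applies the one-dimensional identity $\sum_{n\in\mathbb{Z}}e^{i2\pi nx}=\sum_{k\in\mathbb{Z}}\delta(x-k)$ coordinate-wise, and then pulls the Jacobian $1/|\det(MJ)|=1/\det(\CL)$ out of the resulting product of deltas to recognize $(MJ)^{-1}\bs{b}$ as a generic element of $\CL^{\perp}$. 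Your approach instead starts from the full lattice Poisson identity for the Euclidean dual and reaches the symplectic version by the single substitution $\bs{\eta}=-J\bs{z}$ together with $\CL^{\perp}=J\CL^{*}$. What the paper's route buys is self-containment---only the one-dimensional Dirac comb identity is assumed---while your route is shorter and makes transparent that the symplectic statement is literally the Euclidean one transported by the orthogonal map $J$, which is the same observation the paper exploits elsewhere when identifying $\Theta_{\CL^{\perp}}=\Theta_{\CL^{*}}$.
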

\begin{proof}
We compute straightforwardly using one dimensional Poisson resummation $\sum_{n \in \mathbb{Z}} e^{i2\pi n x} = \sum_{k \in \mathbb{Z}} \delta\lr{x-k} $
\begin{align}
\sum_{\bs{\xi} \in \CL} e^{i2\pi \bs{\xi}^TJ\bs{z}}
&=\sum_{\bs{a} \in \mathbb{Z}^{2n}} e^{i2\pi \bs{a}^T MJ\bs{z}}  \\
&=\prod_{j}\sum_{a_j \in \mathbb{Z}} e^{i2\pi a_j \lr{MJ\bs{z}}_j} \\
&=\prod_{j}\sum_{b_j \in \mathbb{Z}} \delta\lr{\lr{MJ\bs{z}}_j-b_j} \\
&= \frac{1}{\det\lr{\CL}} \sum_{\bs{b} \in \mathbb{Z}^{2n}} \delta^{2n}\lr{\bs{z}-\lr{\bs{b}^TM^{\perp}}^T} \\
&= \frac{1}{\det\lr{\CL}}\sum_{\bs{\xi}^{\perp} \in \CL^{\perp}} \delta^{2n}\lr{\bs{z}-\bs{\xi}^{\perp}}.
\end{align}

\end{proof}

We compute the Wigner function of the code space projector of a GKP code defined with primal basis $M$
\begin{align}
W_{\Pi_{GKP}}\lr{\bs{x}}
&= \int_{\mathbb{R}^{2n}} d\bs{\eta}\, e^{-i2\pi \bs{x}^T J \bs{\eta}} \Tr\lrq{D\lr{\bs{\eta}} \Pi_{GKP}} \\
&=\sum_{\bs{\xi}\in \CL}  e^{i\phi_M\lr{\bs{\xi}}}  \int_{\mathbb{R}^{2n}} d\bs{\eta}\, e^{-i2\pi \bs{x}^T J \bs{\eta}} \Tr\lrq{D\lr{\bs{\eta}} D\lr{\bs{\xi}} }\\
&=\sum_{\bs{\xi}\in \CL}  e^{i\phi_M\lr{\bs{\xi}}} e^{i2\pi \bs{x}^T J \bs{\xi}}. \label{eq:Wigner_phi}
\end{align}
When $\phi_M=0 \mod 2\pi$,  such as when the GKP code is defined with symplectically even lattice,  we can compute the sum using Lemma \ref{lem:dirac_comb},
\begin{equation}
W_{\Pi_{GKP}}\lr{\bs{x}}= \frac{1}{\det\lr{\CL}}\sum_{\bs{\xi}^{\perp} \in \CL^{\perp}} \delta^{2n}\lr{\bs{x}-\bs{\xi}^{\perp}}, \label{eq:Wigner_delta}
\end{equation}
such that the Wigner function is precisely given by a generalized Dirac comb over the dual lattice $\CL^{\perp}$.  When $\phi_M \neq 0 \mod 2\pi$,  however,  this is no longer true.  The above expression will also be substantially different from a mere shift of the Dirac comb over the lattice since $\phi_M$ is generally a \emph{quadratic} function. 
 We can express it compactly using the general \emph{Riemann theta function} \cite{Wirtinger}
\begin{equation}
\vartheta\lr{\bs{u}\, \vert\, F}=\sum_{\bs{m}\in \mathbb{R}^{2n}} e^{i2\pi \lr{\frac{1}{2}\bs{m}^TF\bs{m}+\bs{m}^T\bs{u}}},
\end{equation}
as
\begin{equation}
W_{\Pi_{GKP}}\lr{\bs{x}}= \frac{1}{\det\lr{\CL}}\vartheta\lr{-MJ\bs{x}\, \vert \,A_{\lowertriangle}}.
\end{equation}
Note,  however,  that $A_{\lowertriangle}$ does not have positive definite imaginary part so that,  similar to the Dirac comb above,  this expression does not converge.  This is a general artefact of exact GKP states discussed here which are unphysical as presented and,  in practice,  will be endowed with particular regularizing functions depending on the concrete physical implementation.  When such regularization is present,  the corresponding sum can be expressed using the following generalization of Poisson resummation.

\begin{lem}[Poisson resummation]\label{lem:poisson}
Let $\CL \subseteq \mathbb{R}^{2n}$ be a lattice with symplectic dual $\CL^{\perp}$,  we have for all Schwartz functions $f:\, \mathbb{R}^{2n} \rightarrow \mathbb{C}$,

\begin{equation}
\sum_{\bs{\xi}\in \CL} f\lr{\bs{\xi}}=\frac{1}{\det\lr{\CL}} \sum_{\bs{\xi}^{\perp} \in \CL^{\perp}} \hat{f}\lr{\bs{\xi}^{\perp}},
\end{equation}
where
\begin{equation}
\hat{f}\lr{\bs{x}}=\int_{\mathbb{R}^{2n}}d\bs{y}f\lr{\bs{y}}e^{i2\pi \bs{y}^TJ\bs{x}}
\end{equation}
is the symplectic Fourier transform of $f$.
\end{lem}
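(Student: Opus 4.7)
The plan is to derive this statement from the preceding Dirac comb representation (Lemma~\ref{lem:dirac_comb}) by testing both sides against $\hat f$. Concretely, I would multiply the identity
$$\sum_{\bs{\xi}\in\CL} e^{i 2\pi \bs{\xi}^T J \bs{z}} = \frac{1}{\det\lr{\CL}} \sum_{\bs{\xi}^\perp \in \CL^\perp} \delta^{2n}\lr{\bs{z} - \bs{\xi}^\perp}$$
by $\hat f(\bs{z})$ and integrate over $\bs{z} \in \mathbb{R}^{2n}$. By the sifting property of the delta, the right-hand side immediately becomes $\det\lr{\CL}^{-1}\sum_{\bs{\xi}^\perp \in \CL^\perp} \hat f(\bs{\xi}^\perp)$, which is already the target right-hand side of the lemma.

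For the left-hand side, I would swap the (absolutely convergent) sum with the integral and use the antisymmetry $\bs{\xi}^T J \bs{z} = -\bs{z}^T J \bs{\xi}$ induced by $J^T = -J$, so that each summand equals $\int d\bs{z}\, \hat f(\bs{z}) e^{-i 2\pi \bs{z}^T J \bs{\xi}}$. This is precisely the symplectic Fourier transform of $\hat f$ evaluated at $-\bs{\xi}$. A short computation---writing out the double integral, exchanging the order of integration, and using $J^2 = -I$ together with $|\det J| = 1$ to collapse the inner integral to a delta in $\bs{y}-\bs{z}$---establishes the inversion identity $\hat{\hat f} = f$, so each summand reduces to $f(-\bs{\xi})$. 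Summing over $\bs{\xi} \in \CL$ and using $\CL = -\CL$ (lattices are symmetric under negation) then yields $\sum_{\bs{\xi} \in \CL} f(\bs{\xi})$, which is the left-hand side of the lemma.

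The only delicate point is justifying the distributional pairing and the interchange of sum and integral. Since $f$ is Schwartz, so is $\hat f$, and both sides of Lemma~\ref{lem:dirac_comb} are tempered distributions, so the pairing is well-defined and all reorderings are legitimate; the remainder is sign bookkeeping induced by $J^T = -J$.

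An equivalent, self-contained alternative would be to form the $\CL$-periodic average $F(\bs{x}) = \sum_{\bs{\xi} \in \CL} f(\bs{x} + \bs{\xi})$, expand it in the characters $\lrc{\bs{x} \mapsto e^{i 2\pi \bs{x}^T J \bs{\eta}}}_{\bs{\eta} \in \CL^\perp}$ of $\mathbb{R}^{2n}/\CL$---the index set $\CL^\perp$ appearing precisely because $\CL$-periodicity of these characters is equivalent to $\bs{\xi}^T J \bs{\eta} \in \mathbb{Z}$ for all $\bs{\xi} \in \CL$---compute the Fourier coefficients by unfolding the integral over the fundamental parallelepiped $\CP(M)$ against $f$ (immediately giving $\det\lr{\CL}^{-1}\,\hat f(-\bs{\eta})$), and evaluate at $\bs{x}=0$.
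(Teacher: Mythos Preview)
Your proposal is correct. The paper's proof is precisely your ``equivalent, self-contained alternative'': it forms the periodization $F(\bs{z})=\sum_{\bs{\xi}\in\CL}f(\bs{z}+\bs{\xi})$, expands it as a Fourier series over $\CL^\perp$ (invoking Lemma~\ref{lem:dirac_comb} to verify the expansion), unfolds the integral over a fundamental domain to identify the coefficients with $\det(\CL)^{-1}\hat f(\bs{y})$, and evaluates at $\bs{z}=0$. Your sign bookkeeping is slightly different (you index characters by $e^{i2\pi\bs{x}^TJ\bs{\eta}}$ rather than the paper's $e^{i2\pi\bs{y}^TJ\bs{z}}$, hence the $\hat f(-\bs{\eta})$), but the lattice symmetry $\CL^\perp=-\CL^\perp$ absorbs this, as you note.

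Your primary route---pairing both sides of Lemma~\ref{lem:dirac_comb} directly against $\hat f$ and invoking the inversion identity $\hat{\hat f}=f$ together with $\CL=-\CL$---is a genuinely different and somewhat more economical argument once the Dirac comb identity is already in hand. It trades the Fourier-series machinery for a single distributional pairing plus the (easily checked) self-inverse property of the symplectic Fourier transform. The paper's approach has the mild advantage of being the textbook argument that works even without first establishing the Dirac comb as a separate lemma, but in the present context, where Lemma~\ref{lem:dirac_comb} precedes, your primary approach is cleaner.
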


\begin{proof}
We proof this fact following the analogous proof using the usual Fourier transform provided in \cite{theta_course}.  First define
\begin{equation}
F\lr{\bs{z}}=\sum_{\bs{\xi}\in \CL} f\lr{\bs{z}+\bs{\xi}}.
\end{equation}
This function is invariant under transformations $\bs{z}\mapsto \bs{z}+\CL$,  such that it descends to a function on $\mathbb{R}^{2n}/\CL$ and has Fourier expansion
\begin{equation}
F\lr{\bs{z}}=\sum_{\bs{y} \in \CL^{\perp}} \hat{F}\lr{\bs{y}}e^{i2\pi \bs{y}^TJ\bs{z}},
\end{equation}
where
\begin{equation}
\hat{F}\lr{\bs{y}}=\frac{1}{\det\lr{\CL}} \int_{\mathbb{R}^{2n}/\CL}d\bs{x}\, F\lr{\bs{x}}e^{i2\pi \bs{x}^TJ\bs{y}}.
\end{equation}
Using lemma \ref{lem:dirac_comb} the correctness of the Fourier expansion can be straightforwardly verified.
Let $\CP \sim \mathbb{R}^{2n}/\CL$ be a fundamental domain of $\CL$.  We compute for $\bs{y} \in \CL^{\perp}$ (such that $e^{i2\pi\bs{x}^TJ\bs{y}}$ is well defined on $\bs{x} \in \mathbb{R}^{2n}/\CL$)
\begin{align}
\det\lr{\CL}\hat{F}\lr{\bs{y}}
&=\sum_{\bs{\xi}\in \CL} \int_{\bs{x}\in \CP}d\bs{x}\,  f\lr{\bs{x}+\bs{\xi}}e^{i2\pi\bs{x}^TJ\bs{y}} \\
&=\sum_{\bs{\xi}\in \CL} \int_{\bs{x}\in \CP-\bs{\xi}}d\bs{x}\,  f\lr{\bs{x}}e^{i2\pi\bs{x}^TJ\bs{y}} \\
&=\hat{f}\lr{\bs{y}}, 
\end{align}
such that we have
\begin{equation}
F\lr{\bs{z}}=\frac{1}{\det\lr{\CL}} \sum_{\bs{\xi}^{\perp} \in \CL^{\perp}} \hat{f}\lr{\bs{y}} e^{i2\pi \bs{y}^TJ\bs{z}}.
\end{equation}
Taking $z=0$ completes the proof.
\end{proof}

We use Lemma \ref{lem:poisson} to compute the Wigner function of a GKP code,  where we include generic Gaussian regularization parametrized by $\epsilon > 0$.
\begin{align}
W_{\Pi_{GKP}}\lr{\bs{x}} 
&= \lr{\frac{\epsilon}{\pi}}^n \sum_{\bs{\xi}\in \CL}   e^{-\epsilon \bs{\xi}^T\bs{\xi} }  e^{i\phi_M\lr{\bs{\xi}}} e^{i2\pi \bs{x}^T J \bs{\xi}}\\
&=\lr{\frac{\epsilon}{\pi}}^n \sum_{\bs{\xi}\in \CL} e^{-\bs{\xi}^T \lr{\epsilon I - i M^{-1}A_{\lowertriangle}M^{-T}}\bs{\xi}+ i2\pi \bs{x}^TJ\bs{\xi}} \\
&=\frac{1}{\det\lr{\CL}}  \frac{\epsilon^n}{\sqrt{\det\lr{B}}} \sum_{\bs{\xi}^{\perp} \in \CL^{\perp}}  e^{-\pi^2 \lr{\bs{x}-\bs{\xi}^{\perp}}^T J B^{-1} J^T\lr{\bs{x}-\bs{\xi}^{\perp}} }.
\end{align}
where we have defined $B=\epsilon I - i M^{-1}A_{\lowertriangle}M^{-T}$.  We observe that for $A_{\lowertriangle}=0$ in the limit $\epsilon \rightarrow 0$ we again obtain eq.~\eqref{eq:Wigner_delta}.  However,  in general for odd $A_{\lowertriangle}$ the Wigner function is modified non-trivially.  In this example the regularization was chosen generically without reference to a clear physical motivation other than that physical states need to be normalizable.  It would be interesting to further study how the structure of a physical model of the code space changes with $A_{\lowertriangle}$ in the Wigner representation for alternative regularization schemes such as ones discussed in ref.  \cite{Mensen_2021}.

\section{GKP Codes from degenerate lattices}\label{appendix:degenerate}

In this section we briefly discuss how degenerate symplectic lattices can be used for encoding continuous degrees of freedom into a collection of $n$ oscillators,  following the definition and analysis in \cite{Noh_2020,  oscillator_to_oscillator}.  To simplify the notation we will work in the convention where $\bs{\hat{x}}=\lr{\hat{q}_1, \hat{p}_1, \hdots,  \hat{q}_n, \hat{p}_n,}$ and the symplectic form is $J=I_n \otimes J_2$.
We encode a state vector $\ket{\psi}\in\mathcal{H}^{\otimes k}$ describing $k$ oscillator modes into $\mathcal{H}^{\otimes n}$ by appending it with the ancillary state vector $\ket{A}=\ket{GKP}^{\otimes r}$,  where $r=n-k$,  and acting on the composite state with a Gaussian unitary $U_S$ (see fig.  \ref{fig:O2O}).
Each state vector $\ket{GKP}$ is the unique code state of the GKP stabilizer group described by $M_{GKP}=I_2$,  the symplectic self-dual square lattice,  such that the overall ancillary state vector $\ket{A}$ is stabilized by a GKP stabilizer group $\mathcal{S}_A$ associated to a lattice generated by $M_{A}=I_{2r}$ which we denote by $\CL_A=\CL\lr{M_A}$.
The code state is hence described by 
\begin{align}
\ket{\overline{\psi}} &=U_{S^{-1}}\lr{\ket{\psi} \otimes \ket{A}} \nonumber\\
&=U_{S^{-1}}\lr{\ket{\psi} \otimes D\lr{\bs{s}} \ket{A}} \nonumber \\ 
&=U_{S^{-1}} D\lr{\bs{0}_{2k}\oplus \bs{s}}U_{S^{-1}}^{\dagger}\ket{\overline{\psi}}  \nonumber \\ 
&= D\lr{S\lr{\bs{0}_{2k}\oplus \bs{s}}}\ket{\overline{\psi}} \hspace{.5cm} \forall \bs{s} \in \CL_A
\end{align}
and is stabilized by displacements associated to the lattice $\CL_{osc}$ generated by the rows of 
\begin{equation}
M_{osc}=\lrq{0_{2k} \oplus M_{A}}S^T,
\end{equation}
where we used $0_{2k}$ to denote the $2k \times 2k$ zero-matrix.  We denote the corresponding stabilizer group by $\mathcal{S}_{osc}$.
It is immediate that $M_{osc}$ is not a full rank matrix,  such that there will be a continuum of solutions $\bs{x}$ to 
\begin{equation}
MJ\bs{x} = 0 \mod 1.
\end{equation}
In particular,  we can obtain logical displacements by propagating  a logical displacement $D\lr{\bs{\alpha}}$ on the data mode through the encoding unitary
\begin{equation}
U_{S^{-1}}\lr{D\lr{\bs{\alpha}}\ket{\psi} \otimes \ket{A}} = D\lr{S\lr{\bs{\alpha}\oplus \bs{0}_{2r}}}\ket{\overline{\psi}}.
\end{equation}
Finally,  we obtain that the centralizer of the stabilizer group $\mathcal{S}_{osc}$ is given by displacements within
\begin{equation}
\CL_{osc}^{\perp}=\lrc{S\lr{\bs{\alpha} \oplus  \bs{0}_{2r}} + \bs{\xi}\big\vert\,  \bs{\alpha} \in \mathbb{R}^{2k},  \, \bs{\xi} \in \CL_{osc},  }.
\end{equation}
In contrast to the encoding of qubits and qudits,  where logical operations are encoded in the discrete collection of cosets of the quotient lattice,  the oscillator-to-oscillator encoding does not immediately inherit error correction properties from an underlying notion of (euclidean) distance,  where displacements of amplitude below a certain threshold are guaranteed to not harm the logical information.  Instead,  the protection stems from that fact that the encoding unitary can be tailored to correlate error displacements between the data- and the ancillary modes such that measuring the stabilizers $\mathcal{S}_A$ of the ancillary system allows to diagnose the error that incurred on the data modes.  That is,  when the error is promised to be sufficiently small,   the modular quadrature information gained from measuring $\mathcal{S}_A$ reveals the actual displacement error value.  Let $\mathcal{N}$ denote the Gaussian displacement channel.  We then have that $\rho_0=\ketbra{\psi} \otimes \ketbra{A}$ undergoes the evolution 

\begin{align}
\rho_0 
&\mapsto U_{S^{-1}}^{\dagger }  \int d^{2n} \bs{x}\,  e^{-\frac{1}{2} \bs{x}^T \lr{\frac{I}{\sigma^2}}\bs{x}} D\lr{\bs{x}} U_{S^{-1}} \rho_0 U_{S^{-1}}^{\dagger } D\lr{\bs{x}}^{\dagger}  \, U_{S^{-1}}\nonumber \\
&=   \int d^{2n} \bs{x}\,  e^{-\frac{1}{2} \bs{y}^T \lr{\frac{S^TS}{\sigma^2}}\bs{y}} D\lr{\bs{y}}  \rho_0 D\lr{\bs{y}}^{\dagger}, 
\end{align}
that is $\rho_0$ effectively undergoes a Gaussian displacement channel with covariance $\Sigma=\sigma^2 \lr{S^TS}^{-1}$.  This scheme can be understood as generalization and QEC application of the displacement-sensor setup developed in \cite{Duivenvoorden_Sensor}.  For concrete examples and further details on the oscillator-to-oscillator encoding we refer the reader to ref.  \cite{Noh_2020} and for analysis of the possibility of a threshold to ref.  \cite{oscillator_to_oscillator}.  

\begin{figure}[H]
\center
\includegraphics[width=.6\textwidth]{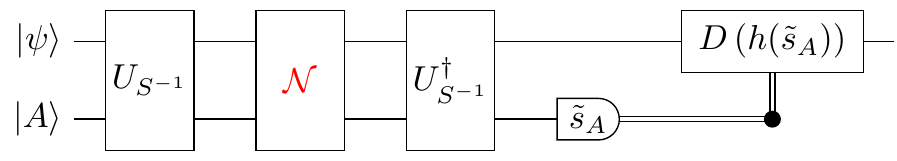}
\caption{Oscillator-to-oscillator encoding of ref.~\cite{Noh_2020}.  $\tilde{s}_A$ denotes the Eigenvalues obtained from measuring the generating set of stabilizers in $S_A$ and $h(\cdot)$ an estimator function that estimates the logical shift based on $\tilde{s}_A$.}\label{fig:O2O}
\end{figure}

To go beyond the encoding of either continuous or discrete information into a collection of oscillators,  we remark that this scheme can be extended to host both continuous and  discrete information.   One way to do so is by choosing as input ancillary state vector $\ket{A}$ a code state of a non-trivial GKP code described by a lattice $\CL\lr{M_A}$ with $|\det\lr{M_AJ_{2r}M_A^T} |>1$ such as one discussed in the main body of this work.  Here,  the ancillary state on the second register $\mathcal{H}^{\otimes r}$ would host all the discrete information with the addition that the errors applied to it may be (depending on the choice of $S$) strongly correlated with the errors on the first register $\mathcal{H}^{\otimes k}$ that hosts continuous information.  The stabilizer measurements $\mathcal{S}_A$ would hence also reveal information about the exact displacement the first register $\mathcal{H}^{\otimes k}$ has undergone where $\lambda_1\lr{\CL_A^{\perp}}$ sets the scale for the variance up to which the modular measurement outcomes $\tilde{s}_A$ still accurately estimate the exact shift on the first register (it needs to lie inside the dual Voronoi cell $\mathcal{V}^{\perp}$ of $\CL_A^{\perp}$).  As far as we know 
(see, e.g., Theorem 3) this does not need to limit the euclidean distance $\Delta$ with which the discrete information is protected.  We note that this hybrid scheme is equivalent to a setup where a GKP encoded discrete variable state is considered part of the CV input $\ket{\psi}$.  A more in-depth analysis of such hybrid scheme extends beyond the scope of this work and is therefore deferred to future work.

\bibliographystyle{quantum}
\bibliography{bib_lattice}

\end{document}